\documentclass[journal,11pt,twocolumn]{IEEEtran}
\usepackage{graphicx}
\usepackage{cite}
\usepackage{array}
\usepackage{tabulary}
\newcolumntype{K}[1]{>{\centering\arraybackslash}p{#1}}

\usepackage{chemarrow}
\usepackage{color}
\usepackage{algorithmic}
\usepackage{algorithm}
\usepackage{clrscode}
\usepackage{amsmath, amssymb, mathrsfs, amsfonts, amsthm}
\usepackage{epsfig, epstopdf}
\usepackage{multirow}
\usepackage{bm}

\allowbreak
\usepackage[font=small]{caption}
\usepackage{mathdots}
\usepackage{stfloats}
\usepackage{subcaption}
\usepackage{enumitem,kantlipsum}
\usepackage{enumerate}
\usepackage{float}

\theoremstyle{theorem}
\newtheorem{lemma}{Lemma}

\newtheorem{remark}{Remark}

\newcommand{\Prob}{\mathrm{P}}

\renewcommand{\arraystretch}{1.2}
\newcommand\blfootnote[1]{%
  \begingroup
  \renewcommand\thefootnote{}\footnote{#1}%
  \addtocounter{footnote}{-1}%
  \endgroup
}

\newcommand\sbullet[1][.75]{\mathbin{\vcenter{\hbox{\scalebox{#1}{$\bullet$}}}}}
\allowdisplaybreaks

\date{}
\title{Interference Alignment Using Reaction in Molecular Interference Channels}
\author{\IEEEauthorblockN{Maryam Farahnak-Ghazani, Mahtab Mirmohseni, and Masoumeh Nasiri-Kenari
}\\
\IEEEauthorblockA{Sharif University of Technology}}

\begin{document}
\maketitle
\begin{abstract}
Co-channel interference (CCI) is a performance limiting factor in molecular communication (MC) systems with shared medium. Interference alignment (IA) is a promising scheme to mitigate CCI in traditional communication systems. Due to the signal-dependent noise in MC systems, the traditional IA schemes are less useful in MC systems. In this paper, we propose a novel IA scheme in molecular interference channels (IFCs), based on the choice of releasing/sampling times. To cancel the aligned interference signals and reduce the signal dependent noise, we use molecular reaction in the proposed IA scheme. We obtain the feasible region for the releasing/sampling times in the proposed scheme. Further, we investigate the error performance of the proposed scheme. Our results show that the proposed IA scheme using reaction improves the performance significantly.\blfootnote{This work was supported in part by the Iran National Science Foundation (INSF) Research Grant on Nano-Network Communications and in part by the Research Center of Sharif University of Technology.}
\blfootnote{The authors are with the Department of Electrical Engineering, Sharif University of Technology, Tehran, Iran (email: maryam.farahnak@ee.sharif.edu, mirmohseni@sharif.edu, mnasiri@sharif.edu).}
\end{abstract}

\section{Introduction}

In MC networks, where molecules are used as information carriers \cite{gohari2016information}, multiple transmitter-receiver pairs may operate in a shared medium \cite{dinc2017theoretical}.
In this case, if the transmitters use the same molecule type for communication and operate at the same time, there is a co-channel interference (CCI) and we have a molecular interference channel (IFC). Inter-symbol interference (ISI) and CCI are two performance limiting factors in MC systems \cite{kuran2011co}. ISI mitigating techniques in MC systems are presented in \cite{IWCIT2016type, farahnak2018medium, noel2014improving}.

The effect of CCI is considered in \cite{kuran2011co, pierobon2012intersymbol}. To avoid CCI in MC systems, the common approach is to use different molecule types for each transmitter-receiver pair, which is analogous to using different frequencies in wireless communication systems. As shown in \cite{kuran2011co}, the effect of CCI is negligible after a certain distance called molecular re-use distance and the same molecule type can be used after this distance, which is similar to the concept of frequency re-use distance in wireless communication systems. However, when the distance is less than the molecular re-use distance, CCI is not negligible, and we should use different molecule types to avoid CCI. Using different molecule types, we achieve the degrees of freedom (DoF) of $\frac{1}{2}$. However, as shown in \cite{cadambe2008interference}, a $K$-user IFC with single antenna at the transmitters and the receivers has the DoF of $\frac{K}{2}$, i.e., each user can reach half of the capacity it can reach in the absence of interference. Interference alignment (IA) is a technique to increase the DoF of a network when the distances of the transmitter-receiver pairs are less than the re-use distance, and the interference signal is comparable to the desired signal. The idea of IA is to align the signal vectors at the receivers such that the interference signals occupy the same space, while the desired signal lies in a separate space \cite{jafar2011interference}.
With the help of IA techniques, we can use the resources in the network more efficiently.

In molecular communication (MC) systems, due to the low data rate and the limitation in the number of molecule types, the usage of IA techniques in the networks can be very beneficial. 
 In this paper, we propose an IA scheme to mitigate the effect of CCI when the distance is less than the re-use distance. 
In classic communication systems, for $K$-user IFC with time/frequency varying channel coefficients, an asymptotic IA scheme is proposed to achieve the DoF of $\frac{K}{2}$ as the number of time slots or frequencies used for a super-symbol increases \cite{cadambe2008interference}. This scheme can be applied to MC systems using multiple time slots 
 or multiple molecule types 
 to form a super-symbol. However, this scheme is not practical since to achieve the DoF of $\frac{K}{2}$, the number of time slots or molecule types should be very large.
Another scheme in classic IFCs is to align interference signals by changing the channel coefficients, 
or changing the propagation delays \cite{cadambe2008interference}. To adopt this idea to MC systems, in this paper, we make use of changing the channel coefficients by choosing different releasing times at the transmitters and/or sampling times at the receivers. Based on this idea, we propose an IA scheme by the choice of releasing/sampling times in molecular IFCs. \textcolor{black}{The transmitters and receivers should be synchronized for the proposed IA scheme. The synchronization methods in MC systems are investigated in \cite{huang2020clock, 
jamali2017symbol, li2019clock}.
In this paper, to avoid complexity, we assume perfect synchronization among the transmitters and receivers. }

We also face another challenge in applying IA schemes to MC systems. Most IA schemes in classic communications, e.g., the asymptotic IA scheme,
require high signal-to-noise ratio (SNR)
\cite{jafar2011interference}. However, in MC systems, there is usually a signal dependent receiver noise \cite{gohari2016information},
which results in high noise levels for high signal levels, and hence, the classic IA schemes may not be very useful in MC systems. 
In this paper, we propose using molecular reaction \cite{farahnak2018medium} to cancel the aligned interference signals in the medium and reduce the signal dependent noise. We investigate the error performance and show that the proposed IA scheme using reaction improves the performance significantly.

Our main contributions in this paper are as follows: \\
$\sbullet$ We apply the asymptotic IA scheme in the classic communications to a $K$-user molecular IFC.\\
$\sbullet$ We propose an IA scheme based on the choice of releasing/sampling times in molecular systems for a 3-user IFC using two molecule types, and use the reaction in the proposed IA scheme to cancel the aligned interference signals in the medium and to reduce the signal dependent noise.\\
$\sbullet$ We obtain the feasible region for the releasing/sampling times in the proposed IA scheme with and without reaction. For a special case of the releasing/sampling times, we simplify the equations.\\
$\sbullet$ We investigate the error performance of the interference channel using the proposed IA scheme with and without reaction. It is seen that the IA scheme with reaction improves the performance of the system significantly.

The structure of the paper is as follows: In Section~\ref{sec:model}, we describe the system model. In Section~\ref{sec:IA}, we describe the asymptotic IA and the proposed IA schemes. In Section~\ref{sec:opt_times}, we investigate the optimum and sub-optimum sampling times and in Section~\ref{sec:error_prob}, we obtain the error probabilities of the IA schemes. The numerical results are given in Section~\ref{simulation}. Finally, in Section~\ref{conclusion} we conclude the paper.

\textbf{Notation}: Throughout the paper, vectors and matrices are shown with bold letters, $\bm{X}^\textrm{T}$ shows the transpose of vector or matrix $\bm{X}$, $\textrm{diag}(\bm{A})$ is a diagonal matrix whose entries are the elements in the vector $\bm{A}$, and $\bm{A} \equiv \bm{B}$ is equivalent to $\textrm{column-span}(\bm{A}) =\textrm{column-span}(\bm{B})$. 

\section{System Model }\label{sec:model}
We consider an IFC with $K$ molecular transmitter-receiver pairs (Fig.~\ref{fig1}). The location of the $j$-th transmitter ($\textrm{Tx}_j$) and the $i$-th receiver ($\textrm{Rx}_i$), for $i,j\in\{1,...,K\}$, are noted by $\bm{r}_j^{\textrm{Tx}}$ and $\bm{r}_i^{\textrm{Rx}}$, respectively, and the message of $\textrm{Tx}_j$ is noted by $M_j \in \{0,...,M-1\}$. The time is slotted with duration $T_\textrm{s}$.

\begin{figure}[t]
\centering
\includegraphics[scale=0.6]{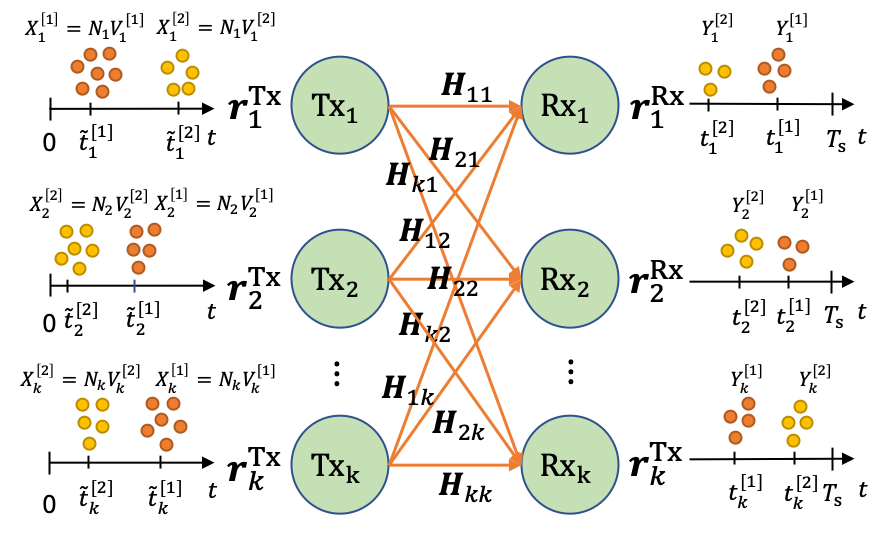}
\caption{$K$-user molecular IFC.}
\label{fig1}
\vspace{-1.5em}
\end{figure}

\textbf{Transmitter}: Each $\textrm{Tx}_j, j=1,...,K$ uses the same $L$ molecule types, and releases $X_{j}^{[l]}=N_jV_{j}^{[l]}, l=1,...,L$ molecules of type $l$ in time $\tilde{t}_j^{[l]}$, where $0<\tilde{t}_j^{[l]}< T_\textrm{s}$ and $N_j \in \{\zeta_0,...,\zeta_{M-1}\}$ is the total number of molecules of all types released from $\textrm{Tx}_j$, i.e., $\sum_{l=1}^{L}X_{j}^{[l]}=N_j$, which requires $\sum_{l=1}^{L}V_{j}^{[l]}=1$. We use concentration shift keying (CSK) modulation at the transmitters, i.e., for the message $M_j=m$, $m=0,..., M-1$, the total released molecules at $\textrm{Tx}_j$ is $N_j=\zeta_m$.
We denote the number of released molecules of all types from $\textrm{Tx}_j$ with vector $\bm{X}_j=[X_{j}^{[1]},...,X_{j}^{[L]}]^\textrm{T}=N_j\bm{V}_j$, where $\bm{V}_j=[V_{j}^{[1]},...,V_{j}^{[L]}]^\textrm{T}$ is called the beamforming vector.

\textbf{Channel}: The channel impulse response from $\textrm{Tx}_j$ to $\textrm{Rx}_i$ for the $l$-th molecule type, noted by $h_{ij}^{[l]}(t)$, is defined as the concentration of the $l$-th molecule type at $\textrm{Rx}_i$ and in time $t$ when one molecule is released from $\textrm{Tx}_j$ in time $\tilde{t}_{j}^{[l]}$. If different molecule types 
do not interfere with each other,
$h_{ij}^{[l]}(t)$ can be obtained from advection-diffusion equation for 3-D space as\cite{gohari2016information}
\begin{align}\label{Channel_impulse}
h_{ij}^{[l]}(t)=\frac{1[t>\tilde{t}_j^{[l]}]}{(4\pi D_l(t-\tilde{t}_j^{[l]}))^{\frac{3}{2}}}e^{-\frac{||\bm{r}_i^{\textrm{Rx}}-\bm{r}_j^{\textrm{Tx}}-\bm{\nu}(t-\tilde{t}_j^{[l]})||^2}{4D_l(t-\tilde{t}_j^{[l]})}}, 
\end{align}
for $i,j \in\{1,...,K\}$ and $l\in \{1,...,L\}$, where $D_l$ is the diffusion coefficient of the $l$-th molecule type and $\bm{\nu}$ is the flow velocity of the medium. For simplicity of exposition, for the analytical derivations, we assume that the ISI is negligible in the channel, which can be satisfied by using enzymes that remove the remaining molecules from the previous transmissions in the medium \cite{noel2014improving}. In Section \ref{simulation}, we investigate the error probability of the proposed IA scheme in the presence of ISI using simulation.
In the absence of ISI, the concentration of the $l$-th molecule type at $\textrm{Rx}_i$ from $\textrm{Tx}_j$ can be obtained as $C_{ij}^{[l]}(t)=X_{i}^{[l]}h_{ij}^{[l]}(t)$. 

\textbf{Receiver}: Each receiver $\textrm{Rx}_i, i=1,....,K$ is assumed to be transparent with volume $V_\textrm{R}$ (radius $r_\textrm{R}$), and counts the number of molecules that enters its volume. We assume that $\textrm{Rx}_i$ uses one-sample decoder with sampling time $t_{i}^{[l]}$ for the $l$-th molecule type, where ${\tilde{t}_{j}^{[l]}<t_{i}^{[l]}<T_\textrm{s}}$, $i,j \in \{1,...,K\}, l \in \{1,...,L\}$.\footnote{The one sample decoder at the receiver is commonly used in MC systems, as in \cite{noel2014improving, akdeniz2020equilibrium}.} The releasing and sampling times at each transmitter and receiver are shown in Fig.~\ref{fig1} for two molecule types, i.e., $L=2$. Assuming uniform concentration in the receiver volume, which is valid if the distance of the transmitter and the receiver is sufficiently large compared to the receiver radius \cite{noel2014diffusive}, the mean number of counted molecules at $\textrm{Rx}_i$ from $\textrm{Tx}_j$ is obtained as $\mu_{ij}^{[l]}=C_{ij}^{[l]}(t_{i}^{[l]})V_\textrm{R}=X_{i}^{[l]}h_{ij}^{[l]}(t_{i}^{[l]})V_\textrm{R}$. We denote $H_{ij}^{[l]}=h_{ij}^{[l]}(t_{i}^{[l]})V_\textrm{R}$. Hence, the mean number of counted molecules of type $l$ at $\textrm{Rx}_i$ transmitted from all transmitters is $\mu_i^{[l]}=\sum_{j=1}^{K}\mu_{ij}^{[l]}=\sum_{j=1}^{K}X_j^{[l]} H_{ij}^{[l]}$.
We denote the vector of mean number of counted molecules of all types at $\textrm{Rx}_i$ with vector $\bm{\mu}_i=[\mu_i^{[1]},...,\mu_i^{[L]}]^\textrm{T}$. Hence, $\bm{\mu}_i=\sum_{j=1}^{K}\bm{H}_{ij}\bm{X}_j$, where
\small
\begin{equation}
\begin{aligned}
\bm{H}_{ij}=\textrm{diag}(H_{ij}^{[1]},H_{ij}^{[2]},...,H_{ij}^{[L]}),\qquad i,j \in\{1,...,K\}.
\end{aligned}
\end{equation}
\normalsize
We assume that the channel state information (CSI) is known at the transmitters and receivers. 
\textcolor{black}{Further, we assume that the transmitters and receivers are synchronized.}

\textbf{Noise}: Let $Y_i^{[l]}$ be the number of counted molecules of type $l$ at $\textrm{Rx}_i$ transmitted from all transmitters. Assuming the counting noise at the receivers and an environment noise with mean $\mu_\textrm{n}^{[l]}$ for the $l$-th molecule type, $Y_i^{[l]}$ has a Poisson distribution conditioned on the transmitter messages with mean $\mu_i^{[l]}+\mu_\textrm{n}^{[l]}$ \cite{gohari2016information}. So, we have a signal dependent noise at the receivers. For simplicity of the analysis, we assume $\mu_\textrm{n}^{[l]}=0$. However, in Section \ref{simulation}, we investigate the effect of the environment noise.

\section{Interference Alignment}\label{sec:IA}
In this section, we first apply the asymptotic IA scheme of \cite{cadambe2008interference} to $K$-user molecular IFC, where we use different molecule types or time slots for sending a super-symbol and increase their number to reach the DoF of $\frac{K}{2}$, asymptotically. However, we should note that increasing the number of molecule types or the number of time slots is not practical. A practical solution is to partition the $K$-user pairs to clusters with $3$-user pairs and use this scheme with $3$ molecule types in each cluster, as in \cite{loch2014practical}, where we reach the DoF of $\frac{4}{3}$. Further, this scheme is only useful in systems with high SNRs. Since in MC systems, we face a signal dependent receiver noise, which makes the SNR low, the traditional IA schemes are less useful. To overcome this issue, in the second part, we propose a non-asymptotic IA scheme by the choice of releasing/sampling times for 3-user IFC using 2 molecule types, which has a DoF of $\frac{3}{2}$. We use reaction in the proposed IA scheme to reduce the signal dependent noise. We should note that using the asymptotic IA scheme for 3 user pairs, we can't reach the DoF of $\frac{3}{2}$ with limited number of molecule types (for 3 molecule types, we can reach the DoF of $\frac{4}{3}$) and we should increase the number of molecules types to reach the DoF of $\frac{3}{2}$, asymptotically.

\textbf{Asymptotic IA scheme}: We consider a super-symbol at each transmitter $\textrm{Tx}_i$ consisting of $n_{\textrm{s},i}$ symbols and send it using different variations of the channel to reach the DoF of $\frac{K}{2}$, asymptotically. 
To have the varying channel gains, we can use either different molecule types in one time slot, or one molecule type in multiple time slots with different releasing/sampling times in each time slot. Here, we assume $L$ different molecule types in one time slot. Hence, according to the system model, the number of released molecules of all types from $\textrm{Tx}_j$ is obtained from $\bm{X}_j=N_j\bm{V}_j$. Using the method in \cite{cadambe2008interference}, we assume $n_{\textrm{s},1}=(n+1)^N$, $n_{\textrm{s},i}=n^N, i=2,...,K$, and $L=(n+1)^N+n^N$, in which $N=(K-1)(K-2)-1$ and $n$ is an auxiliary variable, which is going to be increased in order to achieve the DoF of $\frac{K}{2}$. The beamforming vectors at the transmitters are chosen as
\small
\begin{align}
\nonumber
\bm{V}_1&=\bigg\{\prod_{\substack{g,q \in \{2,...,K\},\\ m \neq g, (g,q)\neq (2,3)}}(\bm{T}_{gq})^{\alpha_{gq}}\bm{w}: \forall  \alpha_{gq} \in\{0,...,n\}\bigg\}, \\
\bm{V}_j&=\bm{S}_j\bm{B}, \qquad j=2,...,K,
\end{align}
\normalsize
where $\bm{w}$ is a $L \times 1$ vector as $\bm{w}=[1,...,1]^\textrm{T}$, and
\small
\begin{equation}
\begin{aligned}
\notag
&\bm{S}_j=(\bm{H}_{1j})^{-1}\bm{H}_{13}(\bm{H}_{23})^{-1}\bm{H}_{21}, \qquad j=2,...,K,\\
&\bm{T}_{ij}=(\bm{H}_{i1})^{-1}\bm{H}_{ij}\bm{S}_j,\qquad i,j=2,...,K, \quad i\neq j,\\
&\bm{B}=\bigg\{\prod_{\substack{g,q \in \{2,..K\},\\ g \neq q, (g,q)\neq (2,3)}}(\bm{T}_{gq})^{\alpha_{gq}}\bm{w}: \forall  \alpha_{gq} \in\{0,...,n-1\}\bigg\}.
\end{aligned}
\end{equation}
\normalsize
The DoF is equal to $\frac{(n+1)^N+(K-1)n^N}{(n+1)^N+n^N}$ and by increasing $n$, we can reach the DoF of $\frac{K}{2}$. 

\textbf{Proposed non-asymptotic IA scheme by choice of releasing/sampling times}: We assume $K=3$ and $L=2$. 
For the 3-user IFC, the vector of mean number of counted molecules at $\textrm{Rx}_i$,  $i \in \{1,2,3\}$, is
\begin{align}\label{signals}
\nonumber
\bm{\mu}_i&=\sum_{j=1}^{3}\bm{H}_{ij}X_j=\bm{H}_{i1}\bm{X}_1+\bm{H}_{i2}\bm{X}_2+\bm{H}_{i3}\bm{X}_3\\
&= \bm{H}_{i1}\bm{V}_1N_1+ \bm{H}_{i2}\bm{V}_2N_2+ \bm{H}_{i3}\bm{V}_3N_3.
\end{align}
For IA, we should have
\begin{equation}\label{cond12}
\begin{aligned}
\textrm{Rx}_1: \quad \bm{H}_{12}\bm{V}_2\equiv\bm{H}_{13}\bm{V}_3\not \equiv\bm{H}_{11}\bm{V}_1,\\
\textrm{Rx}_2: \quad \bm{H}_{21}\bm{V}_1\equiv\bm{H}_{2 3}\bm{V}_3\not\equiv\bm{H}_{22}\bm{V}_2,\\
\textrm{Rx}_3: \quad \bm{H}_{31}\bm{V}_1\equiv\bm{H}_{32}\bm{V}_2\not\equiv\bm{H}_{33}\bm{V}_3.
\end{aligned}
\end{equation}
From \eqref{cond12}, if the channel matrices are such that $T=\bm{H}_{13}^{-1}\bm{H}_{12}\bm{H}_{21}^{-1} \bm{H}_{23}\bm{H}_{32}^{-1}\bm{H}_{31}\equiv I$, where $I$ is an identity matrix of order 2, or equivalently,
\begin{equation}\label{cond1}
\begin{aligned}
\frac{H_{13}^{[2]}}{H_{13}^{[1]}}.\frac{H_{12}^{[1]}}{H_{12}^{[2]}}.\frac{H_{21}^{[2]}}{H_{21}^{[1]}}.\frac{H_{23}^{[1]}}{H_{23}^{[2]}}.\frac{H_{32}^{[2]}}{H_{32}^{[1]}}.\frac{H_{31}^{[1]}}{H_{31}^{[2]}}=1,
\end{aligned}
\end{equation}
we can choose the following beamforming vectors that satisfy \eqref{cond12}:
\begin{align}\label{beam_vec}
\nonumber
\bm{V}_1&\equiv[1,1]^{\textrm{T}},\quad \bm{V}_2\equiv\bm{H}_{32}^{-1} \bm{H}_{31} \bm{V}_1=\left[\frac{H_{31}^{[1]}}{H_{32}^{[1]}}, \frac{H_{31}^{[2]}}{H_{32}^{[2]}}\right]^{\textrm{T}},\\
\bm{V}_3&\equiv\bm{H}_{23}^{-1} \bm{H}_{21} \bm{V}_1=\left[\frac{H_{21}^{[1]}}{H_{23}^{[1]}}, \frac{H_{21}^{[2]}}{H_{23}^{[2]}}\right]^{\textrm{T}}.
\end{align}
We should normalize the beamforming vectors such that we have a unit norm, i.e., $|\bm{V}_i|_1=1$. Further, \eqref{cond12} requires the following conditions:
\begin{equation}\label{cond2}
\begin{aligned}
 \bm{H}_{12}\bm{V}_2\not\equiv\bm{H}_{11}\bm{V}_1, ~ \bm{H}_{21}\bm{V}_1\not\equiv\bm{H}_{22}\bm{V}_2, ~ \bm{H}_{31}\bm{V}_1\not\equiv\bm{H}_{33}\bm{V}_3.
\end{aligned}
\end{equation}
We call the condition in \eqref{cond1} as the IA condition and the conditions in \eqref{cond2} as the independency conditions. Our goal is to change the channel coefficients such that the IA and independency conditions hold, thus we can align the interference signals at the receivers by choosing the beamforming vectors at the transmitters as \eqref{beam_vec}. 
\newline
Using the beamforming vectors in \eqref{beam_vec}, the mean signal vectors at the receivers are obtained from \eqref{signals} as
\small
\begin{equation}\label{received_signals}
\begin{aligned}
\textrm{Rx}_1: \bm{\mu}_1&=\frac{N_1}{2}\left[\begin{matrix}
H_{11}^{[1]}\\H_{11}^{[2]}
\end{matrix}\right]
+\frac{N_2}{A}\left[\begin{matrix}
H_{12}^{[1]}\frac{H_{31}^{[1]}}{H_{32}^{[1]}}\\H_{12}^{[2]}\frac{H_{31}^{[2]}}{H_{32}^{[2]}}
\end{matrix}\right]+\frac{N_3}{B}\left[\begin{matrix}
H_{13}^{[1]}\frac{H_{21}^{[1]}}{H_{23}^{[1]}}\\H_{13}^{[2]}\frac{H_{21}^{[2]}}{H_{23}^{[2]}}
\end{matrix}\right],\\
\textrm{Rx}_2: \bm{\mu}_2&=\frac{N_1}{2}\left[\begin{matrix}
H_{21}^{[1]}\\H_{21}^{[2]}
\end{matrix}\right]
+\frac{N_2}{A}\left[\begin{matrix}
H_{22}^{[1]}\frac{H_{31}^{[1]}}{H_{32}^{[1]}}\\H_{22}^{[2]}\frac{H_{31}^{[2]}}{H_{32}^{[2]}}
\end{matrix}\right]
+\frac{N_3}{B} \left[\begin{matrix}
H_{21}^{[1]}\\H_{21}^{[2]}
\end{matrix}\right],\\
\textrm{Rx}_3: \bm{\mu}_3&=\frac{N_1}{2}\left[\begin{matrix}
H_{31}^{[1]}\\H_{31}^{[2]}
\end{matrix}\right]
+\frac{N_2}{A}\left[\begin{matrix}
H_{31}^{[1]}\\H_{31}^{[2]}
\end{matrix}\right]
+\frac{N_3}{B}\left[\begin{matrix}
H_{33}^{[1]}\frac{H_{21}^{[1]}}{H_{23}^{[1]}}\\H_{33}^{[2]}\frac{H_{21}^{[2]}}{H_{23}^{[2]}}
\end{matrix}\right],
\end{aligned}
\end{equation}
\normalsize
where $A=\frac{H_{31}^{[1]}}{H_{32}^{[1]}}+\frac{H_{31}^{[2]}}{H_{32}^{[2]}}$ and $B=\frac{H_{21}^{[1]}}{H_{23}^{[1]}}+\frac{H_{21}^{[2]}}{H_{23}^{[2]}}$. From \eqref{received_signals}, it can be seen that the interference signals at Rx$_2$ and Rx$_3$ are aligned without resorting to \eqref{cond1}.
However, to align the interference signals at Rx$_1$, \eqref{cond1} is needed to hold. For this purpose, we change the channel coefficients by the choice of the releasing times at the transmitters, $\tilde{t}_{j}^{[l]}$, and/or the sampling times at the receivers, $t_i^{[l]}$, to meet the conditions in \eqref{cond1} and \eqref{cond2}.\footnote{In MC systems, releasing time is usually chosen as zero, i.e., the molecules are released at the beginning of the time slot, and the sampling time is usually chosen such that the concentration of the molecules is maximized at the time \cite{noel2014improving}. However, it can be shown that if we use these times for each transmitter-receiver pair, the desired signal and the interference signals will be all aligned and the independency condition cannot be met. }
In the following, we simplify the alignment and independency equations in \eqref{cond1} and \eqref{cond2} to obtain a feasible region for $t_i^{[l]}$s and $\tilde{t}_{j}^{[l]}$s. 
\newline
\begin{figure*}
 \begin{align}\label{cond1_simplified}
&r_{13}^2.f(\Delta t_{13}^{[1]},\Delta t_{13}^{[2]})-r_{12}^2.f(\Delta t_{12}^{[1]},\Delta t_{12}^{[2]})+r_{21}^2.f(\Delta t_{21}^{[1]},\Delta t_{21}^{[2]})-r_{23}^2.f(\Delta t_{23}^{[1]},\Delta t_{23}^{[2]})\\\nonumber
&\quad +r_{32}^2.f(\Delta t_{32}^{[1]},\Delta t_{32}^{[2]})-r_{31}^2.f(\Delta t_{31}^{[1]},\Delta t_{31}^{[2]})+6\ln(\frac{\Delta t_{13}^{[1]}}{\Delta t_{13}^{[2]}}.\frac{\Delta t_{12}^{[2]}}{\Delta t_{12}^{[1]}}.\frac{\Delta t_{21}^{[1]}}{\Delta t_{21}^{[2]}}.\frac{\Delta t_{23}^{[2]}}{\Delta t_{23}^{[1]}}.\frac{\Delta t_{32}^{[1]}}{\Delta t_{32}^{[2]}}.\frac{\Delta t_{31}^{[2]}}{\Delta t_{31}^{[1]}})=0.
\end{align}
\hrulefill
\vspace{-1.8em}
\end{figure*}
\begin{figure*}
\small
\begin{align}\label{cond2_simplified}
\nonumber
\textrm{Rx}_1: &r_{11}^2.f(\Delta t_{11}^{[1]},\Delta t_{11}^{[2]})-r_{12}^2.f(\Delta t_{12}^{[1]},\Delta t_{12}^{[2]})+r_{32}^2.f(\Delta t_{32}^{[1]},\Delta t_{32}^{[2]})-r_{31}^2.f(\Delta t_{31}^{[1]},\Delta t_{31}^{[2]})+6\ln(\frac{\Delta t_{11}^{[1]}}{\Delta t_{11}^{[2]}}.\frac{\Delta t_{12}^{[2]}}{\Delta t_{12}^{[1]}}.\frac{\Delta t_{32}^{[1]}}{\Delta t_{32}^{[2]}}.\frac{\Delta t_{31}^{[2]}}{\Delta t_{31}^{[1]}})\neq 0,\\\nonumber
\textrm{Rx}_2: &r_{22}^2.f(\Delta t_{22}^{[1]},\Delta t_{22}^{[2]})-r_{21}^2.f(\Delta t_{21}^{[1]},\Delta t_{21}^{[2]})+r_{31}^2.f(\Delta t_{31}^{[1]},\Delta t_{31}^{[2]})-r_{32}^2.f(\Delta t_{32}^{[1]},\Delta t_{32}^{[2]})+6\ln(\frac{\Delta t_{22}^{[1]}}{\Delta t_{22}^{[2]}}.\frac{\Delta t_{21}^{[2]}}{\Delta t_{21}^{[1]}}.\frac{\Delta t_{31}^{[1]}}{\Delta t_{31}^{[2]}}.\frac{\Delta t_{32}^{[2]}}{\Delta t_{32}^{[1]}})\neq 0,\\
\textrm{Rx}_3: &r_{33}^2.f(\Delta t_{33}^{[1]},\Delta t_{33}^{[2]})-r_{31}^2.f(\Delta t_{31}^{[1]},\Delta t_{31}^{[2]})+r_{21}^2.f(\Delta t_{21}^{[1]},\Delta t_{21}^{[2]})-r_{23}^2.f(\Delta t_{23}^{[1]},\Delta t_{23}^{[2]})+6\ln(\frac{\Delta t_{33}^{[1]}}{\Delta t_{33}^{[2]}}.\frac{\Delta t_{31}^{[2]}}{\Delta t_{31}^{[1]}}.\frac{\Delta t_{21}^{[1]}}{\Delta t_{21}^{[2]}}.\frac{\Delta t_{23}^{[2]}}{\Delta t_{23}^{[1]}})\neq 0.
\end{align}
\normalsize
\vspace{-1.8em}
\hrulefill
\end{figure*}
Using \eqref{Channel_impulse}, the IA condition in \eqref{cond1} simplifies to \eqref{cond1_simplified}
and the independency conditions in \eqref{cond2} simplify to \eqref{cond2_simplified},
for $\tilde{t}_{j}^{[l]}<t_{i}^{[l]}<T_\textrm{s}, i,j \in \{1,2,3\}$, $l\in\{1,2\}$, where $f(\Delta t_{ij}^{[1]},\Delta t_{ij}^{[2]})=\frac{1}{D_1\Delta t_{ij}^{[1]}}-\frac{1}{D_2\Delta t_{ij}^{[2]}}$, $\Delta t_{ij}^{[l]}=t_{i}^{[l]}-\tilde{t}_{j}^{[l]}$, and $r_{ij}^2=||\bm{r}_i^{\textrm{Rx}}-\bm{r}_j^{\textrm{Tx}}||^2, i,j \in \{1,2,3\}$, $l \in \{1,2\}$. The conditions in \eqref{cond1_simplified} and \eqref{cond2_simplified} form a feasible region for $t_{i}^{[l]}$s and $\tilde{t}_j^{[l]}$s noted by the set $\mathcal{T}_\textrm{nr}$, where ``nr" stands for ``no reaction". In Lemma~\ref{lemma1}, we simplify these conditions for a special case of the releasing and sampling times.

\begin{remark} It can be seen that the conditions do not depend on the flow velocity of the medium. 
\end{remark}

\begin{lemma} \label{lemma1} 
For $t_{i}^{[1]}=t_{i}^{[2]}=t_{i}, \tilde{t}_{j}^{[1]}=\tilde{t}_{j}^{[2]}=\tilde{t}_{j}$, $i,j \in \{1,2,3\}$, the IA condition reduces to:
\begin{align}\label{cond1_simplified2}
\nonumber
&\frac{r_{13}^2}{t_{1}-\tilde{t}_{3}}-\frac{r_{12}^2}{t_{1}-\tilde{t}_{2}}+\frac{r_{21}^2}{t_{2}-\tilde{t}_{1}}\\
&\quad -\frac{r_{23}^2}{t_{2}-\tilde{t}_{3}}+\frac{r_{32}^2}{t_{3}-\tilde{t}_{2}}-\frac{r_{31}^2}{t_{3}-\tilde{t}_{1}}=0,
\end{align}
and the independency conditions reduce to
\small
\begin{equation}\label{cond2_simplified2}
\begin{aligned}
&\textrm{Rx}_1:\frac{r_{11}^2}{t_{1}-\tilde{t}_{1}}-\frac{r_{12}^2}{t_{1}-\tilde{t}_{2}}+\frac{r_{32}^2}{t_{3}-\tilde{t}_{2}}-\frac{r_{31}^2}{t_{3}-\tilde{t}_{1}}\neq 0,\\
&\textrm{Rx}_2:\frac{r_{22}^2}{t_{2}-\tilde{t}_{2}}-\frac{r_{21}^2}{t_{2}-\tilde{t}_{1}}+\frac{r_{31}^2}{t_{3}-\tilde{t}_{1}}-\frac{r_{32}^2}{t_{3}-\tilde{t}_{2}}\neq 0,\\
&\textrm{Rx}_3:\frac{r_{33}^2}{t_{3}-\tilde{t}_{3}}-\frac{r_{31}^2}{t_{3}-\tilde{t}_{1}}+\frac{r_{21}^2}{t_{2}-\tilde{t}_{1}}-\frac{r_{23}^2}{t_{2}-\tilde{t}_{3}}\neq 0.
\end{aligned}
\end{equation}
\normalsize
The conditions in \eqref{cond1_simplified2} and \eqref{cond2_simplified2} form a feasible region for $t_i$s and $\tilde{t}_j$s, noted by the set $\mathcal{T}_\textrm{nr,s}$, where ``s" stands for ``special case". 
\end{lemma}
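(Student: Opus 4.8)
The plan is to specialize the general conditions \eqref{cond1_simplified} and \eqref{cond2_simplified} to the symmetric choice of times and show that the per-molecule-type structure collapses. First I would substitute $t_i^{[1]}=t_i^{[2]}=t_i$ and $\tilde{t}_j^{[1]}=\tilde{t}_j^{[2]}=\tilde{t}_j$ into the definition $\Delta t_{ij}^{[l]}=t_i^{[l]}-\tilde{t}_j^{[l]}$, which immediately gives $\Delta t_{ij}^{[1]}=\Delta t_{ij}^{[2]}=t_i-\tilde{t}_j$ for every pair $(i,j)$. Writing $\Delta t_{ij}:=t_i-\tilde{t}_j$, the function $f$ then reduces to $f(\Delta t_{ij},\Delta t_{ij})=\left(\frac{1}{D_1}-\frac{1}{D_2}\right)\frac{1}{\Delta t_{ij}}$, so every term $r_{ij}^2 f(\cdot,\cdot)$ appearing in \eqref{cond1_simplified} and \eqref{cond2_simplified} becomes $\left(\frac{1}{D_1}-\frac{1}{D_2}\right)\frac{r_{ij}^2}{\Delta t_{ij}}$, with each $\Delta t_{ij}>0$ by the standing constraint $\tilde{t}_j<t_i<T_\textrm{s}$.

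Next I would dispose of the logarithmic terms. Each factor inside the logarithms of \eqref{cond1_simplified} and \eqref{cond2_simplified} is a ratio of the form $\Delta t_{ij}^{[1]}/\Delta t_{ij}^{[2]}$ or its reciprocal, and under the symmetric choice each such ratio equals $1$. Hence every product inside a logarithm equals $1$ and all the $6\ln(\cdot)$ contributions vanish.

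With the logarithms gone and the common factor $\left(\frac{1}{D_1}-\frac{1}{D_2}\right)$ appearing in every surviving term, I would factor it out of \eqref{cond1_simplified} and of each line of \eqref{cond2_simplified}. Because the two molecule types are physically distinct, $D_1\neq D_2$ and this factor is nonzero; dividing the IA equation by it yields \eqref{cond1_simplified2}, and since multiplying a quantity by a nonzero constant preserves whether it is equal to or different from zero, the three independency inequalities reduce to \eqref{cond2_simplified2}.

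The argument is essentially a substitution, so there is no deep obstacle; the only point requiring care is the use of $D_1\neq D_2$. This is precisely where the scheme relies on employing two distinct molecule types: if $D_1=D_2$ the common factor vanishes, the IA condition degenerates to $0=0$ and the independency conditions to $0\neq 0$, so the symmetric-time special case is meaningful only for distinct diffusion coefficients. I would therefore record $D_1\neq D_2$ explicitly as the standing assumption that makes the reduction valid.
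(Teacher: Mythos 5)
Your proposal is correct and follows exactly the route the paper intends: the paper's proof of Lemma~\ref{lemma1} simply declares the reduction ``straightforward from \eqref{cond1_simplified} and \eqref{cond2_simplified}'', and your substitution $\Delta t_{ij}^{[1]}=\Delta t_{ij}^{[2]}=t_i-\tilde{t}_j$, the vanishing of the logarithmic terms, and the cancellation of the common nonzero factor $\left(\frac{1}{D_1}-\frac{1}{D_2}\right)$ are precisely the omitted details. Your explicit remark that $D_1\neq D_2$ is needed (and that the reduction degenerates otherwise) is a point the paper leaves implicit in its use of two distinct molecule types, and it is a worthwhile clarification.
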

\begin{proof}
The proof is straightforward from \eqref{cond1_simplified} and \eqref{cond2_simplified}.
\end{proof}
\textbf{Using Reaction in the IA scheme}: To cancel the aligned interference in the proposed IA scheme and reduce the signal dependent noise, we exploit molecular reaction. Consider the following reactions among the molecules of type 1 and 2 ($\textrm{M}_1$ and $\textrm{M}_2$) at $\textrm{Rx}_i$:
\begin{equation}\label{reaction}
\begin{aligned}
&\textrm{Rx}_i: c_i \textrm{M}_1+ \textrm{M}_2+\textrm{E}_i \rightarrow \textrm{products},
\end{aligned}
\end{equation}
for $i \in \{1,2,3\}$, where $\textrm{E}_i$ is an enzyme released from $\textrm{Rx}_i$, which enables $\textrm{M}_1$ and $\textrm{M}_2$ to react with coefficient $c_i$ and with high reaction rate. The products are not detectable by the receivers. \textcolor{black}{Further, we assume that the sampling times of different molecule types at the each receiver are the same, i.e., $t_i^{[1]}=t_i^{[2]}=t_i$, and enzymes are released at the sampling time $t_i$ from each receiver.} According to \eqref{reaction}, to cancel the interference signals at the receivers, we require that the ratio of the interference signal of $\textrm{M}_1$ to the interference signal of $\textrm{M}_2$ at $\textrm{Rx}_i$ be equals to $c_i$. Hence, using the signal vectors in \eqref{received_signals}, we should have
\small
\begin{equation}\label{IA_reaction}
\begin{aligned}
&\textrm{Rx}_1: \frac{H_{12}^{[1]}\frac{H_{31}^{[1]}}{H_{32}^{[1]}}}{H_{12}^{[2]}\frac{H_{31}^{[2]}}{H_{32}^{[2]}}}=c_1, ~ \textrm{Rx}_2: \frac{H_{21}^{[1]}}{H_{21}^{[2]}}=c_2,~ \textrm{Rx}_3: \frac{H_{31}^{[1]}}{H_{31}^{[2]}}=c_3.
\end{aligned}
\end{equation}
\normalsize
We call the conditions in \eqref{IA_reaction} as the reaction conditions. Hence, if it is feasible to choose the releasing/sampling times such that along with the IA and independency conditions in \eqref{cond1} and \eqref{cond2} the reaction conditions also hold, the aligned interference signals at the receivers will be canceled  using reaction. 
Using \eqref{Channel_impulse}, the reaction conditions in \eqref{IA_reaction} simplifies to \eqref{IA_reaction2},
where $f(.,.)$ is defined in \eqref{cond1_simplified}. \textcolor{black}{Note that in the IA, independency, and reaction conditions in \eqref{cond1_simplified}, \eqref{cond2_simplified}, and \eqref{IA_reaction2}, we should further have $t_i^{[1]}=t_i^{[2]}=t_i$.}
\begin{figure*}
\small
\begin{align}\label{IA_reaction2}
\nonumber
\textrm{Rx}_1: &r_{12}^2. f(\Delta t_{12}^{[1]},\Delta t_{12}^{[2]})+r_{31}^2. f(\Delta t_{31}^{[1]},\Delta t_{31}^{[2]})-r_{32}^2. f(\Delta t_{32}^{[1]},\Delta t_{32}^{[2]})+6\ln(\frac{D_1}{D_2}\frac{\Delta t_{12}^{[1]}}{\Delta t_{12}^{[2]}}\frac{\Delta t_{31}^{[1]}}{\Delta t_{31}^{[2]}}\frac{\Delta t_{32}^{[2]}}{\Delta t_{32}^{[1]}})\\\nonumber
&+||\bm{\nu}||^2(\frac{\Delta t_{12}^{[1]}+\Delta t_{31}^{[1]}-\Delta t_{32}^{[1]}}{D_1}-\frac{\Delta t_{12}^{[2]}+\Delta t_{31}^{[2]}-\Delta t_{32}^{[2]}}{D_2})-2(\bm{r}_1^{\textrm{Rx}}-\bm{r}_1^{\textrm{Tx}}).\bm{\nu}(\frac{1}{D_1}-\frac{1}{D_2})=-4\ln{c_1},\\\nonumber
\textrm{Rx}_2: &r_{21}^2. f(\Delta t_{21}^{[1]},\Delta t_{21}^{[2]})+6\ln(\frac{D_1}{D_2}\frac{\Delta t_{21}^{[1]}}{\Delta t_{21}^{[2]}})+||\bm{\nu}||^2(\frac{\Delta t_{21}^{[1]}}{D_1}-\frac{\Delta t_{21}^{[2]}}{D_2})-2(\bm{r}_2^{\textrm{Rx}}-\bm{r}_1^{\textrm{Tx}}).\bm{\nu}(\frac{1}{D_1}-\frac{1}{D_2})=-4\ln{c_2},\\
\textrm{Rx}_3: &r_{31}^2. f(\Delta t_{31}^{[1]},\Delta t_{31}^{[2]})+6\ln(\frac{D_1}{D_2}\frac{\Delta t_{31}^{[1]}}{\Delta t_{31}^{[2]}})+||\bm{\nu}||^2(\frac{\Delta t_{31}^{[1]}}{D_1}-\frac{\Delta t_{31}^{[2]}}{D_2})-2(\bm{r}_3^{\textrm{Rx}}-\bm{r}_1^{\textrm{Tx}}).\bm{\nu}(\frac{1}{D_1}-\frac{1}{D_2})=-4\ln{c_3}.
\end{align}
\normalsize
\hrulefill
\vspace{-1.5em}
\end{figure*}

Two approaches are possible: (i) We may choose the releasing and/or sampling times such that they satisfy the IA and independency conditions in \eqref{cond1_simplified} and \eqref{cond2_simplified}, and then obtain $c_1, c_2,$ and $c_3$ from \eqref{IA_reaction2}. However, we may not find reactions among $\textrm{M}_1$ and $\textrm{M}_2$ with these coefficients. (ii) We may choose $c_1, c_2$, and $c_3$ arbitrarily and then choose the releasing and/or sampling times such that they satisfy the IA, independency, and reaction conditions in \eqref{cond1_simplified}, \eqref{cond2_simplified}, and \eqref{IA_reaction2}. These conditions form a feasible region for $t_{i}^{[l]}$s and $\tilde{t}_j^{[l]}$s, noted by the set $\mathcal{T}_\textrm{r}$, where ``r" stands for ``reaction". We take the second approach and for a special case, simplify the IA, independency, and reaction conditions in the following Lemma. 

\begin{lemma} \label{lemma2}
For $t_{i}^{[1]}=t_{i}^{[2]}=t_{i}, \tilde{t}_{j}^{[1]}=\tilde{t}_{j}^{[2]}=\tilde{t}_{j}$, $i,j \in \{1,2,3\}$, $\bm{\nu}=0$, and $c_1=c_2=c_3=c$, the IA and reaction conditions reduce to
\small
\begin{equation}\label{IA_reaction7}
\begin{aligned}
\Delta \tilde{t}_{12}(1-\frac{r_{12}^2}{r_{32}^2})-\Delta \tilde{t}_{13}(1-\frac{r_{13}^2}{r_{23}^2})=s.(\frac{r_{12}^2}{r_{32}^2}.r_{31}^2-\frac{r_{13}^2}{r_{23}^2}.r_{21}^2),
\end{aligned}
\end{equation}
\normalsize
and the independency conditions reduce to
\small
\begin{align}\label{IA_reaction8}
\nonumber
&\Delta \tilde{t}_{12}\neq s.\frac{r_{22}^2.r_{31}^2-r_{21}^2.r_{32}^2}{r_{32}^2-r_{22}^2},\\
&\Delta \tilde{t}_{13}\notin \big\{s.\frac{r_{13}^2.r_{21}^2-r_{11}^2.r_{23}^2}{r_{23}^2-r_{13}^2}, s.\frac{r_{33}^2.r_{21}^2-r_{31}^2.r_{23}^2}{r_{23}^2-r_{33}^2}\big\},
\end{align}
\normalsize
where $\Delta \tilde{t}_{1j}=\tilde{t}_{1}-\tilde{t}_{j}, j=2,3$, $s=-\frac{\frac{1}{D_1}-\frac{1}{D_2}}{4\ln{c}-6\ln{\frac{D_2}{D_1}}}$. 
Further, it is needed that $\Delta t_{ij}>0, i,j \in\{1,2,3\}$, which results in
$s>0$ and
\begin{align}\label{IA_reaction10}
\nonumber
& \begin{cases}
\Delta \tilde{t}_{13}>-s\min\{r_{21}^2,r_{31}^2\}, \qquad&\textrm{if}~ r_{13}>r_{23}\\\nonumber
-s\min\{r_{21}^2,r_{31}^2\}<\Delta \tilde{t}_{13}<-\frac{s.r_{21}^2}{1-\frac{r_{23}^2}{r_{13}^2}}, ~&\textrm{if}~ r_{13}<r_{23} \end{cases}.\\
& \Delta \tilde{t}_{12}>-s\min\{r_{21}^2,r_{31}^2\}.
\end{align}
The conditions in \eqref{IA_reaction7}-\eqref{IA_reaction10} form a feasible region for $\Delta \tilde{t}_{12}$ and $\Delta \tilde{t}_{13}$, noted by $\mathcal{T}_\textrm{r,s}$. After choosing $\Delta \tilde{t}_{12}$ and $\Delta \tilde{t}_{13}$ in the feasible region, we choose one of the releasing times arbitrarily and obtain the other two releasing times from the values of $\Delta \tilde{t}_{12}$ and $\Delta \tilde{t}_{13}$. Then, $t_{1}$, $t_2$, and $t_3$ can be obtained from $\tilde{t}_{1}$ and $\tilde{t}_3$ as follows:
\begin{align}\label{IA_reaction9}
\nonumber
&t_{1}=\frac{r_{13}^2}{r_{23}^2}\tilde{t}_{1}+(1-\frac{r_{13}^2}{r_{23}^2})\tilde{t}_{3}+s.\frac{r_{13}^2}{r_{23}^2}.r_{21}^2,\\
&t_{2}=\tilde{t}_{1}+s.r_{21}^2,\quad t_{3}=\tilde{t}_{1}+s.r_{31}^2.
\end{align}
\end{lemma}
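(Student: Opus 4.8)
The plan is to specialize the already-derived general conditions \eqref{cond1_simplified2}, \eqref{cond2_simplified2}, and the reaction conditions \eqref{IA_reaction2} to the regime $t_i^{[1]}=t_i^{[2]}=t_i$, $\tilde t_j^{[1]}=\tilde t_j^{[2]}=\tilde t_j$, $\bm{\nu}=0$, $c_1=c_2=c_3=c$. The opening move is routine: under this specialization each $\Delta t_{ij}^{[1]}=\Delta t_{ij}^{[2]}=:\Delta t_{ij}=t_i-\tilde t_j$, so every logarithmic ratio $\ln(\Delta t_{ij}^{[1]}/\Delta t_{ij}^{[2]})$ collapses to $\ln 1=0$, the flow-dependent terms vanish with $\bm{\nu}=0$, and $f(\Delta t_{ij},\Delta t_{ij})=\delta/\Delta t_{ij}$ where $\delta=\tfrac{1}{D_1}-\tfrac{1}{D_2}$. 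The only surviving constant in each line of \eqref{IA_reaction2} is $6\ln(D_1/D_2)$.

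The first substantive step is to read off the two single-index reaction conditions. The Rx$_2$ and Rx$_3$ lines of \eqref{IA_reaction2} become $\delta r_{21}^2/\Delta t_{21}+6\ln(D_1/D_2)=-4\ln c$ and $\delta r_{31}^2/\Delta t_{31}+6\ln(D_1/D_2)=-4\ln c$. Solving each for $\Delta t_{ij}$ and recognizing the definition $s=-\delta/(4\ln c-6\ln(D_2/D_1))$ gives $\Delta t_{21}=s\,r_{21}^2$ and $\Delta t_{31}=s\,r_{31}^2$, which are the last two equalities of \eqref{IA_reaction9} after adding $\tilde t_1$. Substituting $\Delta t_{31}=s\,r_{31}^2$ into the Rx$_1$ line of \eqref{IA_reaction2} cancels the constant and the $r_{31}^2$ term against $-4\ln c$, leaving the clean relation $r_{12}^2/\Delta t_{12}=r_{32}^2/\Delta t_{32}$ (this uses $\delta\neq 0$, i.e.\ $D_1\neq D_2$).

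Next I would feed these four facts into the IA condition \eqref{cond1_simplified2}: the terms $r_{21}^2/\Delta t_{21}$ and $r_{31}^2/\Delta t_{31}$ both equal $1/s$ and cancel, while $r_{12}^2/\Delta t_{12}$ cancels against $r_{32}^2/\Delta t_{32}$ by the Rx$_1$ relation, leaving $r_{13}^2/\Delta t_{13}=r_{23}^2/\Delta t_{23}$. Solving this for $t_1$ with $\Delta t_{23}=\tilde t_1+s\,r_{21}^2-\tilde t_3$ produces the first line of \eqref{IA_reaction9}. I then substitute that expression for $t_1$, together with $t_3=\tilde t_1+s\,r_{31}^2$, into the surviving Rx$_1$ relation in the cross-multiplied form $r_{12}^2\,\Delta t_{32}=r_{32}^2\,\Delta t_{12}$; rewriting $\tilde t_3-\tilde t_2$ as $\Delta\tilde t_{12}-\Delta\tilde t_{13}$ and collecting the coefficients of $\Delta\tilde t_{12}$ and $\Delta\tilde t_{13}$ yields exactly \eqref{IA_reaction7}. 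This is the algebraic core, and the only delicate point here is consistent bookkeeping of signs and of the factor $r_{13}^2/r_{23}^2$.

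For the independency conditions I substitute the same relations into \eqref{cond2_simplified2}: in each receiver's line two of the four $f$-terms cancel (at Rx$_1$, say, $r_{32}^2/\Delta t_{32}=r_{12}^2/\Delta t_{12}$ and $r_{31}^2/\Delta t_{31}=1/s$), reducing it to $\Delta t_{ii}\neq s\,r_{ii}^2$; expressing $\Delta t_{ii}$ through \eqref{IA_reaction9} and clearing denominators gives the exclusions in \eqref{IA_reaction8}. Finally I impose $\Delta t_{ij}>0$ for all nine pairs. Using $\Delta t_{12}=\tfrac{r_{12}^2}{r_{32}^2}\Delta t_{32}$ and $\Delta t_{13}=\tfrac{r_{13}^2}{r_{23}^2}\Delta t_{23}$, the nine constraints collapse: $\Delta t_{21},\Delta t_{31}>0$ force $s>0$, while the pairs involving $\tilde t_2$ and $\tilde t_3$ give $\Delta\tilde t_{12},\Delta\tilde t_{13}>-s\min\{r_{21}^2,r_{31}^2\}$. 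I expect the main obstacle to be the remaining constraint $\Delta t_{11}>0$, which reads $\Delta\tilde t_{13}\bigl(\tfrac{r_{13}^2}{r_{23}^2}-1\bigr)>-\tfrac{r_{13}^2}{r_{23}^2}s\,r_{21}^2$, whose inequality direction flips with the sign of $r_{13}^2-r_{23}^2$: one must verify that for $r_{13}>r_{23}$ the induced lower bound is weaker than $-s\min\{r_{21}^2,r_{31}^2\}$ and is therefore redundant, whereas for $r_{13}<r_{23}$ it supplies the active upper bound $-s\,r_{21}^2/(1-r_{23}^2/r_{13}^2)$, exactly matching \eqref{IA_reaction10}.
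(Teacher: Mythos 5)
Your route is the same as the paper's own proof (Appendix~\ref{apendixa}), essentially line for line: specialize the reaction conditions \eqref{IA_reaction2} under equal times and $\bm{\nu}=0$ to obtain $\Delta t_{21}=s\,r_{21}^2$, $\Delta t_{31}=s\,r_{31}^2$ and the $\textrm{Rx}_1$ relation $r_{12}^2/\Delta t_{12}=r_{32}^2/\Delta t_{32}$; cancel terms in the IA condition \eqref{cond1_simplified2} to get $r_{13}^2/\Delta t_{13}=r_{23}^2/\Delta t_{23}$; solve these for the sampling times to obtain \eqref{IA_reaction9}; back-substitute into the $\textrm{Rx}_1$ relation to obtain \eqref{IA_reaction7}; and impose $\Delta t_{ij}>0$ to obtain $s>0$ and \eqref{IA_reaction10}. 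These steps all check out against the paper's intermediate equations \eqref{IA_reaction_simplified}, \eqref{IA_reac_simplified} and \eqref{deltats}, and your case analysis of the $\Delta t_{11}>0$ constraint (redundant when $r_{13}>r_{23}$, active upper bound when $r_{13}<r_{23}$) is actually spelled out more carefully than in the paper, which passes over it silently.

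However, one step as written would fail. Your claim that each independency line of \eqref{cond2_simplified2} reduces to $\Delta t_{ii}\neq s\,r_{ii}^2$ is true only at $\textrm{Rx}_1$, where the cancelling pair is $r_{12}^2/\Delta t_{12}=r_{32}^2/\Delta t_{32}$ and the surviving term is compared against $r_{31}^2/\Delta t_{31}=1/s$. At $\textrm{Rx}_2$ and $\textrm{Rx}_3$ it is the two $1/s$ terms that cancel against each other, and what survives are the cross-ratio conditions $r_{22}^2/\Delta t_{22}\neq r_{32}^2/\Delta t_{32}$ and $r_{33}^2/\Delta t_{33}\neq r_{23}^2/\Delta t_{23}$, respectively (the paper's \eqref{cond2_simplified5}); the comparison is against $\Delta t_{32}$ and $\Delta t_{23}$, not against $1/s$. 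Had you literally used $\Delta t_{22}\neq s\,r_{22}^2$ and $\Delta t_{33}\neq s\,r_{33}^2$, substituting \eqref{deltats} would give $\Delta \tilde{t}_{12}\neq s(r_{22}^2-r_{21}^2)$ and $\Delta \tilde{t}_{13}\neq s(r_{33}^2-r_{31}^2)$, which are not the exclusions in \eqref{IA_reaction8}; so your stated intermediate form is inconsistent with the conclusion you assert. The repair is routine — use the correct surviving pair at $\textrm{Rx}_2$, $\textrm{Rx}_3$, substitute \eqref{IA_reaction9}, and clear denominators — but the step needs to be restated before the proof is complete.
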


\begin{proof}
The proof is provided in Appendix \ref{apendixa}.
\end{proof}
\begin{remark}
Without loss of generality we assume $D_1>D_2$. Hence, for the spacial case of the times in Lemma~\ref{lemma2}, we should choose molecules to have reaction with $c>(\frac{D_2}{D_1})^{\frac{3}{2}}$ to garantee $s>0$.
\end{remark}

\section{Optimum Sampling and Releasing Times}\label{sec:opt_times}
 To obtain the optimum sampling and releasing times, we minimize the total error probability of the system over the feasible region of the times. Using the error probability at $\textrm{Rx}_i$, noted by $P_{\textrm{e},i}$, $i \in \{1,2,3\}$, the total error probability of the system is 
$P_\textrm{e}=\frac{1}{3}(P_{\textrm{e},1}+P_{\textrm{e},2}+P_{\textrm{e},3}).$

\textbf{IA without reaction}: For the general case of the sampling and releasing times, the times are obtained from the following optimization problem:
\begin{align}\label{times_anr_gen}
\bm{t}_\textrm{nr}=\arg \min_{\bm{t} \in \mathcal{T}_\textrm{nr}} P_\textrm{e,nr},
\end{align}
where $\bm{t}=[\tilde{t}_{1}^{[1]}, \tilde{t}_{1}^{[2]}, \tilde{t}_{2}^{[1]}, \tilde{t}_{2}^{[2]}, \tilde{t}_{ 3,}^{[1]}, \tilde{t}_{3}^{[2]}, t_{1}^{[1]}, t_{1}^{[2]}, t_{2}^{[1]}, t_{2}^{[2]},t_{3}^{[1]}$, $t_{3}^{[2]}]$,
$P_\textrm{e,nr}$ is the total error probability in the IA scheme without reaction, and as defined in Section \ref{sec:IA}, $\mathcal{T}_\textrm{nr}$ is the feasible region of the times in the IA with no reaction. For the special case of the times when $t_i^{[l]}=t_i$ and $\tilde{t}_j^{[l]}=\tilde{t}_j$, $i,j \in \{1,2,3\}$, we should solve
\begin{align}\label{times_anr_spec}
\bm{t}_\textrm{nr,s}=\arg \min_{\bm{t}_\textrm{s} \in \mathcal{T}_\textrm{nr,s}} P_\textrm{e,nr},
\end{align}
where $\bm{t}_\textrm{s}=[\tilde{t}_{1}, \tilde{t}_{2}, \tilde{t}_{3},t_{1}$, $t_{2},t_{3}]$
and $\mathcal{T}_\textrm{nr,s}$ is characterized in Lemma~\ref{lemma1}. 

\textbf{IA with reaction}: Similar to the no reaction case, for the general case of the times, we have
\begin{align}\label{times_ar_gen}
\bm{t}_\textrm{r}=\arg \min_{\bm{t} \in \mathcal{T}_\textrm{r}} P_\textrm{e,r},
\end{align}
where
$P_\textrm{e,r}$ is the total error probability in the IA scheme with reaction, and as defined in Section \ref{sec:IA}, $\mathcal{T}_\textrm{r}$ is the feasible region of the times in the IA with reaction. For the special case of the times, according to Lemma~\ref{lemma2}, the feasible region is obtained for $\Delta \tilde{t}_{12}$ and $\Delta \tilde{t}_{13}$. Hence, we should solve the following optimization problem 
\begin{equation}\label{times_ar_spec}
\begin{aligned}
&[\Delta \tilde{t}_{12, \textrm{r,s}}, \Delta \tilde{t}_{13,\textrm{r,s}}]=\arg \min_{[\Delta \tilde{t}_{12}, \Delta \tilde{t}_{13}] \in \mathcal{T}_\textrm{r,s}} P_\textrm{e,r}.
\end{aligned}
\end{equation}
where 
$\mathcal{T}_\textrm{r, s}$ is characterized in Lemma~\ref{lemma2}. The values of $t_{i}$s and $\tilde{t}_j$s can be obtained from $\Delta \tilde{t}_{12,\textrm{r,s}}, \Delta \tilde{t}_{13,\textrm{r,s}}$,
which are noted by $\tilde{t}_{1,\textrm{r,s}}$, $\tilde{t}_{2,\textrm{r,s}}$, $\tilde{t}_{3,\textrm{r,s}}$, $t_{1,\textrm{r,s}}$, $t_{2,\textrm{r,s}}$, and $t_{3,\textrm{r,s}}$.
 
The optimum sampling times in each case are obtained numerically in Section \ref{simulation}. We note that the values of the channel coefficients and hence the error probability depend on $\Delta t_{ij}^{[l]}$s. Hence, if we shift the releasing and sampling times by a same value, the error probability does not change and thus the optimum times are not unique. We choose the times that are all positive and the smallest one is zero.

\section{Error Performance Analysis of the Proposed IA Scheme}\label{sec:error_prob}
In the proposed IA scheme, according to \eqref{received_signals}, the mean signal vector
 at $\textrm{Rx}_i$, assuming no reaction is
\begin{align}\label{IA_eq}
\bm{\mu}_i=\tilde{\bm{H}}_{i}N_i+\tilde{\bm{H}}_{\textrm{I},i}N_{\textrm{I},i},\qquad i \in \{1,2,3\},
\end{align}
where 
$N_{\textrm{I},1}=\frac{N_2}{A}+\frac{N_3}{B}$, $N_{\textrm{I},2}=\frac{N_1}{2}+\frac{N_3}{B}$, and $N_{\textrm{I},3}=\frac{N_1}{2}+\frac{N_2}{A}$; $A$ and $B$ are defined in \eqref{received_signals}. Further, 
 $\tilde{\bm{H}}_{i}=[\tilde{H}_{i}^{[1]},\tilde{H}_{i}^{[2]}]^\textrm{T}$ and $\tilde{\bm{H}}_{\textrm{I},i}=[\tilde{H}_{\textrm{I},i}^{[1]},\tilde{H}_{\textrm{I},i}^{[2]}]^\textrm{T}$ can be obtained from the channel coefficients using \eqref{received_signals}.

\textbf{IA without reaction}: The message of $\textrm{Tx}_i$ ($M_i$) can be obtained from the number of released molecules, $N_i$.
Using \eqref{IA_eq}, $N_i$ can be obtained at $\textrm{Rx}_i$ from the mean number of received molecules  by zero-forcing as $N_i=\frac{\tilde{H}_{\textrm{I},i}^{[2]}\mu_i^{[1]}-\tilde{H}_{\textrm{I},i}^{[1]}\mu_i^{[2]}}{\tilde{H}_{i}^{[1]}\tilde{H}_{\textrm{I},i}^{[2]}-\tilde{H}_{i}^{[2]}\tilde{H}_{\textrm{I},i}^{[1]}}$. Due to the counting noise, the number of received molecules of type $l$, $Y_i^{[l]}$, has Poisson distribution with mean $\mu_i^{[l]}$. Using the noisy observations, we obtain a random variable $\tilde{N}_i=\frac{\tilde{H}_{\textrm{I},i}^{[2]}Y_i^{[1]}-\tilde{H}_{i}^{[1]}Y_i^{[2]}}{\tilde{H}_{i}^{[1]}\tilde{H}_{\textrm{I},i}^{[2]}-\tilde{H}_{i}^{[2]}\tilde{H}_{\textrm{I},i}^{[1]}}$, with mean $N_i$, whose distribution can be obtained from the distribution of $Y_i^{[1]}$ and $Y_i^{[2]}$. We use maximum a-posteriori (MAP) decision rule on $\tilde{N}_i$ to obtain $M_i$:
\begin{align}
\Prob(\tilde{N}_i=\tilde{n}_i|M_i=1)\overset{M_i=1}{\underset{M_i=0}\gtrless}\Prob(\tilde{N}_i=\tilde{n}_i|M_i=0).
\end{align}
For $\textrm{Rx}_1$, the above equation results in
\begin{align}\label{dec_rule}
\nonumber
&\sum_{m_2,m_3 \in \{0,1\}}\Prob(\tilde{n}_1|M_1=1,M_2=m_2,M_3=m_3)\overset{M_1=1}{\underset{M_1=0}\gtrless}\\&\sum_{m_2,m_3 \in \{0,1\}}\Prob(\tilde{n}_1|M_1=0,M_2=m_2,M_3=m_3).
\end{align}
To obtain the MAP decision rule, we need to obtain the distribution of $\tilde{N}_i$ conditioned on the messages of the transmitters. $\tilde{N}_i$ is a linear combination of two Poisson random variables $Y_i^{[1]}$ and $Y_i^{[2]}$, i.e., $\tilde{N}_i=a_iY_i^{[1]}+b_iY_i^{[2]}$, where $a_i=\frac{\tilde{H}_{\textrm{I},i}^{[2]}}{\tilde{H}_{i}^{[1]}\tilde{H}_{\textrm{I},i}^{[2]}-\tilde{H}_{i}^{[2]}\tilde{H}_{\textrm{I},i}^{[1]}}$ and $b_i=\frac{-\tilde{H}_{\textrm{I},i}^{[1]}}{\tilde{H}_{i}^{[1]}\tilde{H}_{\textrm{I},i}^{[2]}-\tilde{H}_{i}^{[2]}\tilde{H}_{\textrm{I},i}^{[1]}}$. This yields to a very complex problem. For simplicity, we assume Gaussian approximation to the Poisson distribution of $Y_i^{[l]}$, i.e., ${Y_i^{[l]}}_{|m_1,m_2,m_3} \sim \mathcal{N}(\mu_i^{[l]}, \sqrt{\mu_i^{[l]}})$, which is a good approximation if the number of released molecules is large. Since different molecule types do not interfere with each other, $Y_i^{[1]}$ and $Y_i^{[2]}$ are independent conditioned on the messages of the transmitters. Hence, $\tilde{N}_i$ has a Gaussian distribution conditioned on $M_1=m_1$, $M_2=m_2$, and $M_3=m_3$ with mean $\tilde{\mu}_{i}(m_1,m_2,m_3)=a_i \mu_i^{[1]}+b_i\mu_i^{[2]}$ and variance $\tilde{\sigma}_{i}^2(m_1,m_2,m_3)=a_i^2\mu_i^{[1]}+b_i^2\mu_i^{[2]}$. Therefore, the MAP decision rule in \eqref{dec_rule} becomes
\small
\begin{align}\label{dec_rule2}
&\sum_{m_2,m_3 \in \{0,1\}}\big[\frac{1}{\sqrt{2 \pi \tilde{\sigma}_{i}^2(1,m_2,m_3)}}e^{-\frac{(\tilde{n}_1-\tilde{\mu}_{i}(1,m_2,m_3))^2}{2\tilde{\sigma}_{i}^2(1,m_2,m_3)}}\\\nonumber
& \qquad \qquad-\frac{1}{\sqrt{2 \pi \tilde{\sigma}_{i}^2(0,m_2,m_3)}}e^{-\frac{(\tilde{n}_1-\tilde{\mu}_{i}(0,m_2,m_3))^2}{2\tilde{\sigma}_{i}^2(0,m_2,m_3)}}\big]\overset{M_1=1}{\underset{M_1=0}\gtrless} 0.
\end{align}
\normalsize
However, obtaining a closed-form of this decision rule is not possible and we derive it numerically in Section \ref{simulation}. Since the MAP decision rule does not yield to a simple form, we cannot obtain a closed-form equation for the error probability of this scheme and we obtain the error probability using simulation in Section \ref{simulation}.

\textbf{IA with reaction}: For the proposed IA scheme with perfect reaction, the average signal vector at $\textrm{Rx}_i$ after reaction is, \cite{farahnak2018medium},
\small
\begin{align*}
\bm{\mu}_{\textrm{r},i}
=[\mu_{\textrm{r},i}^{[1]},\mu_{\textrm{r},i}^{[2]}]^\textrm{T}=\begin{cases}
\big[(\tilde{H}_{i}^{[1]}-c_i\tilde{H}_{i}^{[2]})N_i,0\big]^\textrm{T}, &\textrm{if}~\frac{\tilde{H}_{i}^{[1]}}{\tilde{H}_{i}^{[2]}}>c_i\\
\big[0,(\tilde{H}_{i}^{[2]}-\frac{1}{c_i}\tilde{H}_{i}^{[1]})N_i\big]^\textrm{T}, &\textrm{if}~\frac{\tilde{H}_{i}^{[1]}}{\tilde{H}_{i}^{[2]}}<c_i
\end{cases}.
\end{align*}
\normalsize
Again, the number of received molecules of type $l$ after reaction, $Y_{\textrm{r},i}^{[l]}$, has Poisson distribution with mean $\mu_{\textrm{r},i}^{[l]}$ \cite{farahnak2018medium}. 
 If $\frac{\tilde{H}_{i}^{[1]}}{\tilde{H}_{i}^{[2]}}>c_i$, we have $y_{\textrm{r},i}^{[2]}=0$, and hence, we use $y_{\textrm{r},i}^{[1]}$ to detect $M_i$ using MAP decision rule, $\Prob(y_{\textrm{r},i}^{[1]}|M_i=1)\overset{M_i=1}{\underset{M_i=0}\gtrless}\Prob(y_{\textrm{r},i}^{[1]}|M_i=0),$
which results in a simple threshold rule as $y_{\textrm{r},i}^{[1]}\overset{M_i=1}{\underset{M_i=0}\gtrless}\gamma_i^{[1]}$, where $\gamma_i^{[1]}=\frac{(\tilde{H}_{i}^{[1]}-c_i\tilde{H}_{i}^{[2]})(\zeta_1-\zeta_0)}{\ln(\frac{\zeta_1}{\zeta_0})}$. Hence, for $\frac{\tilde{H}_{i}^{[1]}}{\tilde{H}_{i}^{[2]}}>c_i$, the error probability at $\textrm{Rx}_i$, $i \in \{1,2,3\}$, is obtained as
\small
\begin{align}\label{eq_Pe_IA}
\nonumber
P_{\textrm{e},i}&=\frac{1}{2}(P\{Y_{\textrm{r},i}^{[1]}>\gamma_i^{[1]}|M_i=0\}+P\{Y_{\textrm{r},i}^{[1]}<\gamma_i^{[1]}|M_i=1\})\\\nonumber
&=\frac{1}{2}\big[1-\sum_{y_{\textrm{r},i}^{[1]}=0}^{\lfloor \gamma_i^{[1]}\rfloor}\big(\frac{((\tilde{H}_{i}^{[1]}-c_i\tilde{H}_{i}^{[2]})\zeta_0)^{y_{\textrm{r},i}^{[1]}} e^{-(\tilde{H}_{i}^{[1]}-c_i\tilde{H}_{i}^{[2]})\zeta_0}}{y_{\textrm{r},i}^{[1]}!}\\
&\quad-\frac{((\tilde{H}_{i}^{[1]}-c_i\tilde{H}_{i}^{[2]})\zeta_1)^{y_{\textrm{r},i}^{[1]}} e^{-(\tilde{H}_{i}^{[1]}-c_i\tilde{H}_{i}^{[2]})\zeta_1}}{y_{\textrm{r},i}^{[1]}!}\big)\big].
\end{align}
\normalsize
The error probability for the case of $\frac{\tilde{H}_{i}^{[1]}}{\tilde{H}_{i}^{[2]}}<c_i$ can be obtained similarly.

\section{Simulation and Numerical Results}\label{simulation}
We evaluate the performance of the proposed IA scheme in a 3-user IFC. We consider the parameters in Table~\ref{parameters}.

\renewcommand{\arraystretch}{0.9}
\begin{table}
\centering
\captionof{table}{Simulation and numerical analysis parameters}
\begin{tabular}{p{3.7cm}|l|l}
\hline\hline
Parameter & Symbol & Value\\\hline
Location of Tx$_1$ & $r_{1}^{\textrm{Tx}}$ & $(0,0,0)~\mu \textrm{m}$ \\
Location of Tx$_2$ &$r_{2}^{\textrm{Tx}}$ & $(0,20,10)~\mu \textrm{m}$\\
Location of Tx$_3$ &$r_{3}^{\textrm{Tx}}$ & $(0,0,30)~\mu \textrm{m}$\\
Location of Rx$_1$ &$r_{1}^{\textrm{Rx}}$ & $(0,150,0)~\mu \textrm{m}$\\
Location of Rx$_2$ &$r_{2}^{\textrm{Rx}}$ & $(0,200,10)~\mu \textrm{m}$\\
Location of Rx$_3$ &$r_{3}^{\textrm{Rx}}$ & $(0,300,20)~\mu \textrm{m}$\\
Diffusion coefficient of M$_1$ & $D_1$ & $10^{-8}~\textrm{m}^2/\textrm{s}$\\
Diffusion coefficient of M$_2$ & $D_2$ & $5 \times 10^{-8}~\textrm{m}^2/\textrm{s}$\\
Reaction coefficients at Rx$_1$, Rx$_2$, and Rx$_3$ & $c_1$, $c_2$, $c_3$ & $2$\\
Receiver radius & $r_\textrm{R}$ & $15~\mu \textrm{m}$\\
Medium flow velocity & $\bm{\nu}$ & $\bm{0}$\\
\hline\hline
\end{tabular}
\label{parameters}
\vspace{-1.5em}
\end{table}

\begin{figure*}
\centering
~~\begin{minipage}{0.31\textwidth}
\centering
\includegraphics[trim={3cm 0 0 0},scale=0.34]{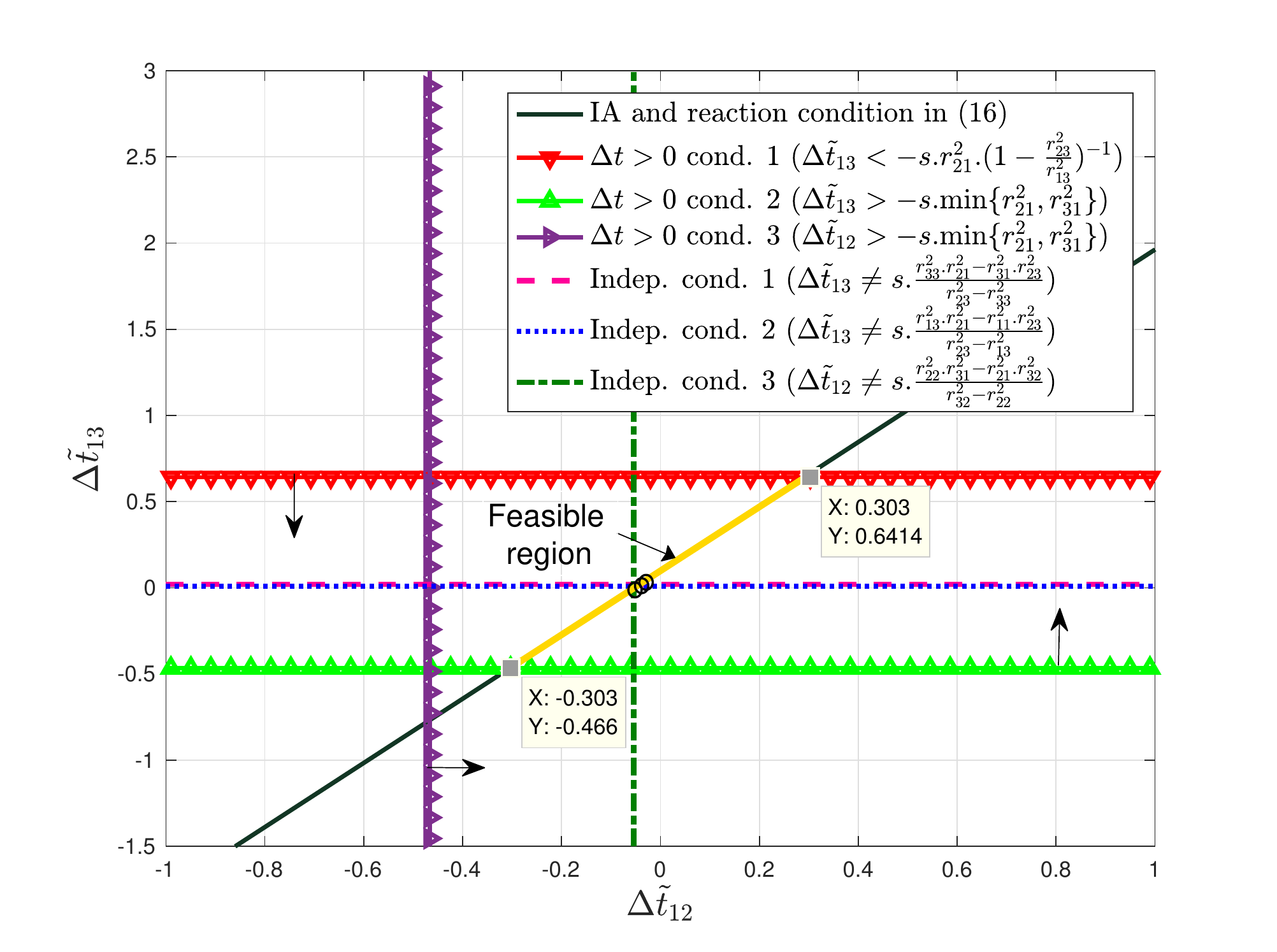}
\caption{Feasible region of $\Delta \tilde{t}_{12}$ and $\Delta \tilde{t}_{13}$ in the proposed IA scheme with reaction.}
\label{figfeasible}
\end{minipage}\quad
\begin{minipage}{0.31\textwidth}
\centering
\includegraphics[trim={1.9cm 0 0 0},scale=0.34]{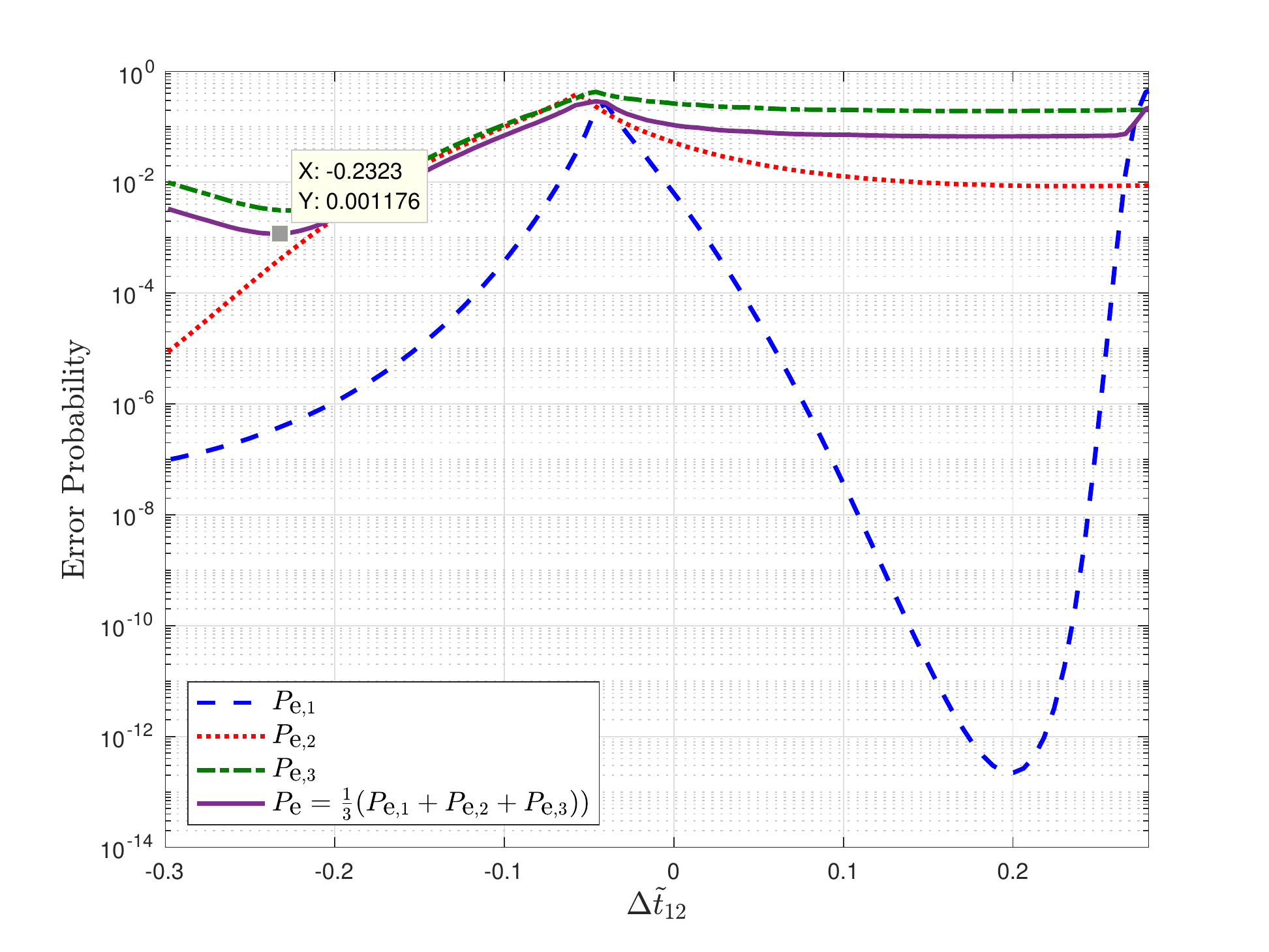}
\caption{The error probability versus $\Delta t_{12}$ for the proposed IA scheme with reaction.}
\label{figoptdtr}
\end{minipage}\quad
\begin{minipage}{0.31\textwidth}
\centering
\includegraphics[trim={1.4cm 0 0 0},scale=0.34]{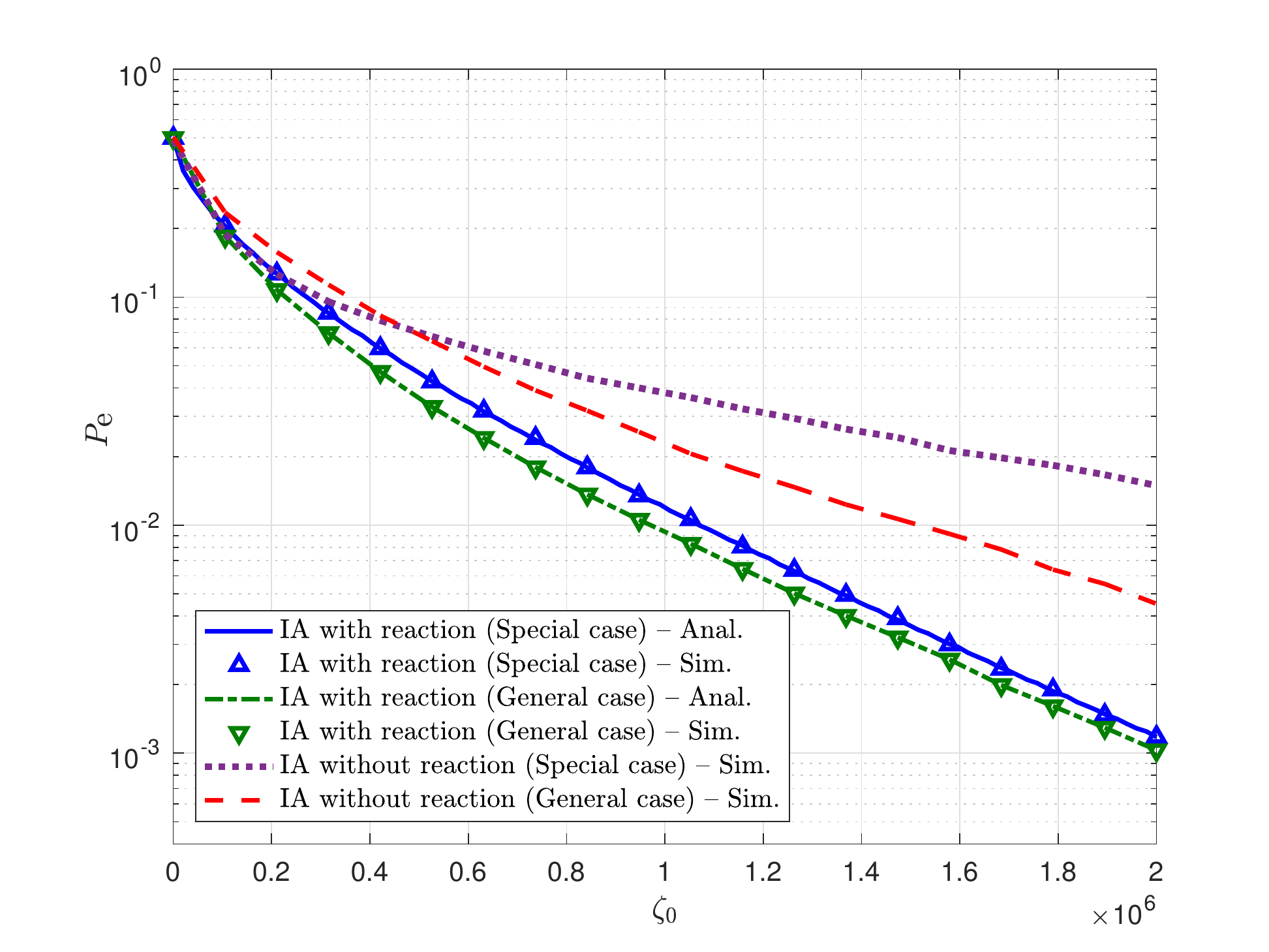}
\caption{The total error probability versus $\zeta_0$ for the proposed IA scheme with and without reaction.}
\label{figoverallPe}
\end{minipage}
\label{figtotalestimator}
\vspace{-0.5em}
\end{figure*}




The feasible region in the proposed IA scheme with reaction for $\Delta \tilde{t}_{12}$ and $\Delta \tilde{t}_{13}$ for the special case of the releasing and sampling times ($t_{i}^{[1]}=t_{i}^{[2]}=t_{i}, \tilde{t}_{j}^{[1]}=\tilde{t}_{j}^{[2]}=\tilde{t}_{j}$, $i,j \in \{1,2,3\}$), i.e., $\mathcal{T}_\textrm{r, s}$, based on Lemma~\ref{lemma2}, is shown in Fig.~\ref{figfeasible}, which is a line segment between the points $(-0.303, -0.466)$ and $(0.303, 0.6414)$, where three points $(-0.0538,0.0005)$, $(-0.0492,0.0092)$, and $(-0.0434,0.0201)$ are excluded from it.  
The error probability at the receivers and the total error probability of the system versus $\Delta \tilde{t}_{12}$ in the feasible region is shown in Fig.~\ref{figoptdtr}.\footnote{Note that since the sampling times are not necessarily chosen in the peak times of the received concentrations, $P_\textrm{e}$ does not go below $0.0012$, while $P_{\textrm{e},1}$ is near $10^{-14}$ for some sampling and releasing times. However, if we do not use IA, for 3 users with 2 molecules types, two transmitters should use one molecule type and the error probability of them becomes near $0.5$ because of the high co-channel interference, and hence 
the total error probability of system will be $P_\textrm{e} \approx 0.5$, 
which is not acceptable.} Here, we assume $\zeta_0=\frac{\zeta_1}{2}=2\times 10^{6}$. It can be seen that the optimum value of $\Delta \tilde{t}_{12}$ which minimizes the total error probability is $\Delta \tilde{t}_{12,\textrm{r,s}}=-0.2323$, and using \eqref{IA_reaction7}, we have $\Delta \tilde{t}_{13,\textrm{r,s}}=-0.3317$. Now, assuming $\tilde{t}_{1,\textrm{r,s}}=0$, we obtain the other releasing times ($\tilde{t}_{2,\textrm{r,s}}$ and $\tilde{t}_{3,\textrm{r,s}}$), and then from \eqref{IA_reaction9}, we obtain $t_{i,\textrm{r,s}}$s, which are provided in Table \ref{table_times}.\footnote{Since $\Delta \tilde{t}_{12,\textrm{r,s}}$ and $\Delta \tilde{t}_{13,\textrm{r,s}}$ are both negative, by assuming $\tilde{t}_{1,\textrm{r,s}}=0$, we can assure to have positive times with smallest one equal to zero.}
 For the general case of the times in the IA scheme with reaction, the optimum times are obtained numerically from \eqref{times_ar_gen} in Table \ref{table_times}.  
The optimum sampling and releasing times in the IA scheme without reaction are also provided in Table \ref{table_times}. 
The optimum times are obtained numerically from \eqref{times_anr_spec} for the special case, and from \eqref{times_anr_gen} for the general case,


 \begin{table*}
\centering
\captionof{table}{Optimum releasing and sampling times}
\begin{tabular}{p{4.4cm}|p{0.65cm}|p{0.65cm}|p{0.65cm}|p{0.65cm}|p{0.65cm}|p{0.65cm}|p{0.65cm}|p{0.65cm}|p{0.65cm}|p{0.65cm}|p{0.65cm}|p{0.65cm}}
\hline\hline
 & $\tilde{t}_{1}^{[1]}$ & $\tilde{t}_{1}^{[2]}$ & $\tilde{t}_{2}^{[1]}$ & $\tilde{t}_{2}^{[2]}$ & $\tilde{t}_{3}^{[1]}$ & $\tilde{t}_{3}^{[2]}$ & $t_{1}^{[1]}$ & $t_{1}^{[2]}$ & $t_{2}^{[1]}$ & $t_{2}^{[2]}$ & $t_{3}^{[1]}$ & $t_{3}^{[2]}$\\\hline
IA with reaction (Special case) & $0$ & $0$ & $0.232$ & $0.232$ & $0.332$ & $0.332$ & $0.410$ & $0.410$ & $0.466$ & $0.466$ & $1.051$ & $1.051$\\
IA with reaction (General case)  &
 $ 0.012$ & $0$ & $0.335$ & $0.396$ & $0.329$ & $0.323$ & $0.411$ & $0.411$ & $0.471$ & $0.471$ & $1.055$ & $1.055$\\
IA without reaction (Special case) & $0.592$ & $0.592$ & $0$ & $0$ & $0.623$ & $0.623$ & $0.674$ & $0.674$ & $0.666$ & $0.666$ & $1.758$ & $1.758$\\
IA without reaction (General case) & $1.262$ & $0.004$ & $0$ & $1.079$ & $1.102$ & $1.555$ & $1.636$ & $1.575$ & $1.392$ & $2.385$ & $1.877$ & $1.814$\\
\hline\hline
\end{tabular}
\label{table_times}
\vspace{-2em}
\end{table*}

The total error probabilities of the system using the IA scheme with and without reaction for the optimum times in the general and special cases are shown in Fig.~\ref{figoverallPe} versus the number of released molecules for bit 0 (i.e., $\zeta_0$). Here, we assume $\zeta_1=2\zeta_0$. For the IA with reaction, we obtain the error probability using the analytical result in \eqref{eq_Pe_IA} and simulation, and for the IA without reaction, we obtain the decision rule numerically from \eqref{dec_rule2} and obtain the error probability using simulation.
It is seen that using reaction in the IA scheme improves the total error probability significantly.
Further, for the IA scheme with reaction, the simulations and the analysis provide the same result.

\begin{figure}[t]
\centering
\includegraphics[trim={0cm 0 0 0cm}, scale=0.34]{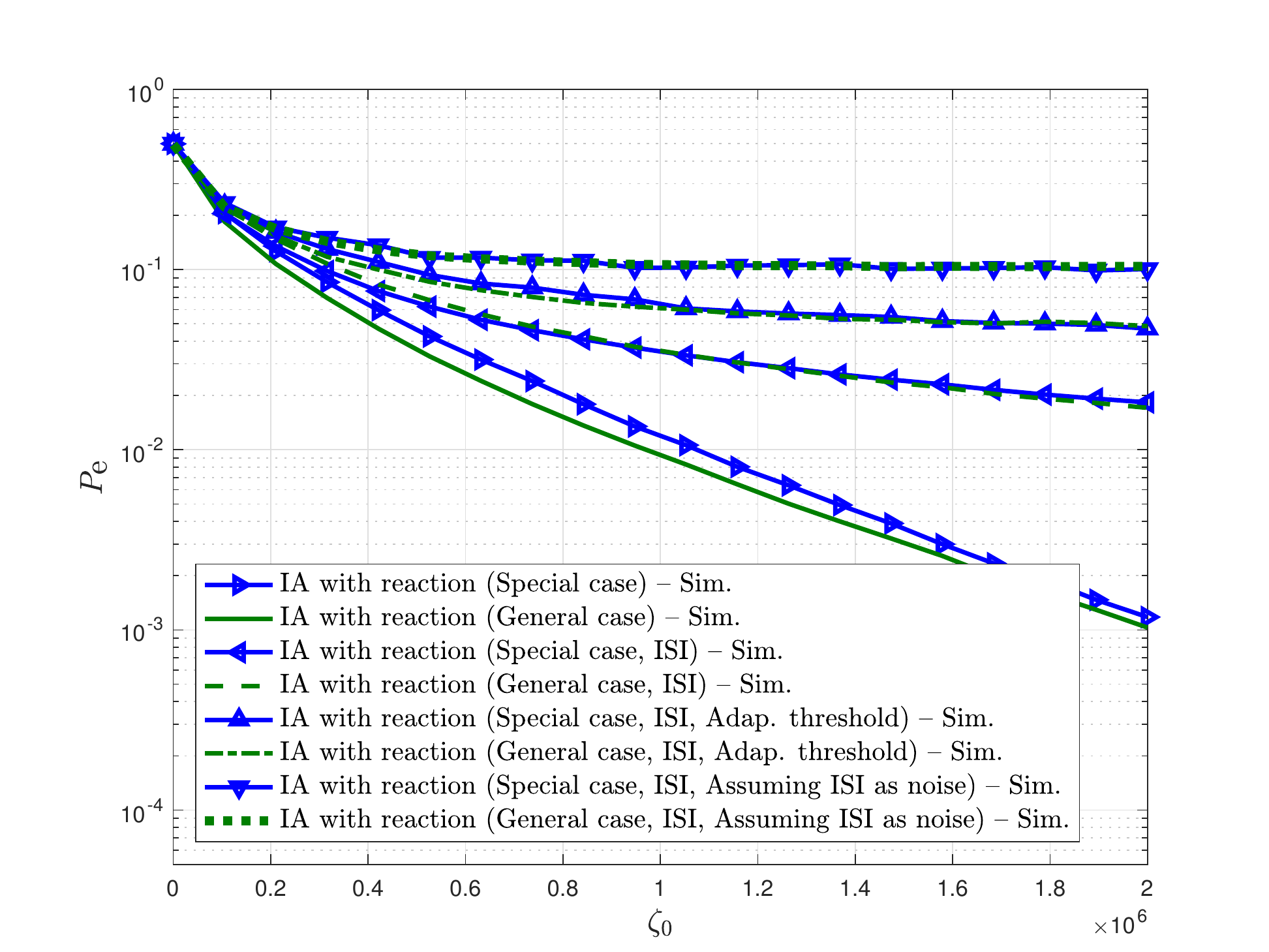}
\vspace{-0.5em}
\caption{The total error probability versus $\zeta_0$ for the proposed IA scheme with reaction in the presence of ISI.}
\label{figoverallPe_ISI}
\vspace{-1.5em}
\end{figure}

\begin{figure}[t]
\centering
\includegraphics[trim={0cm 0 0 0}, scale=0.34]{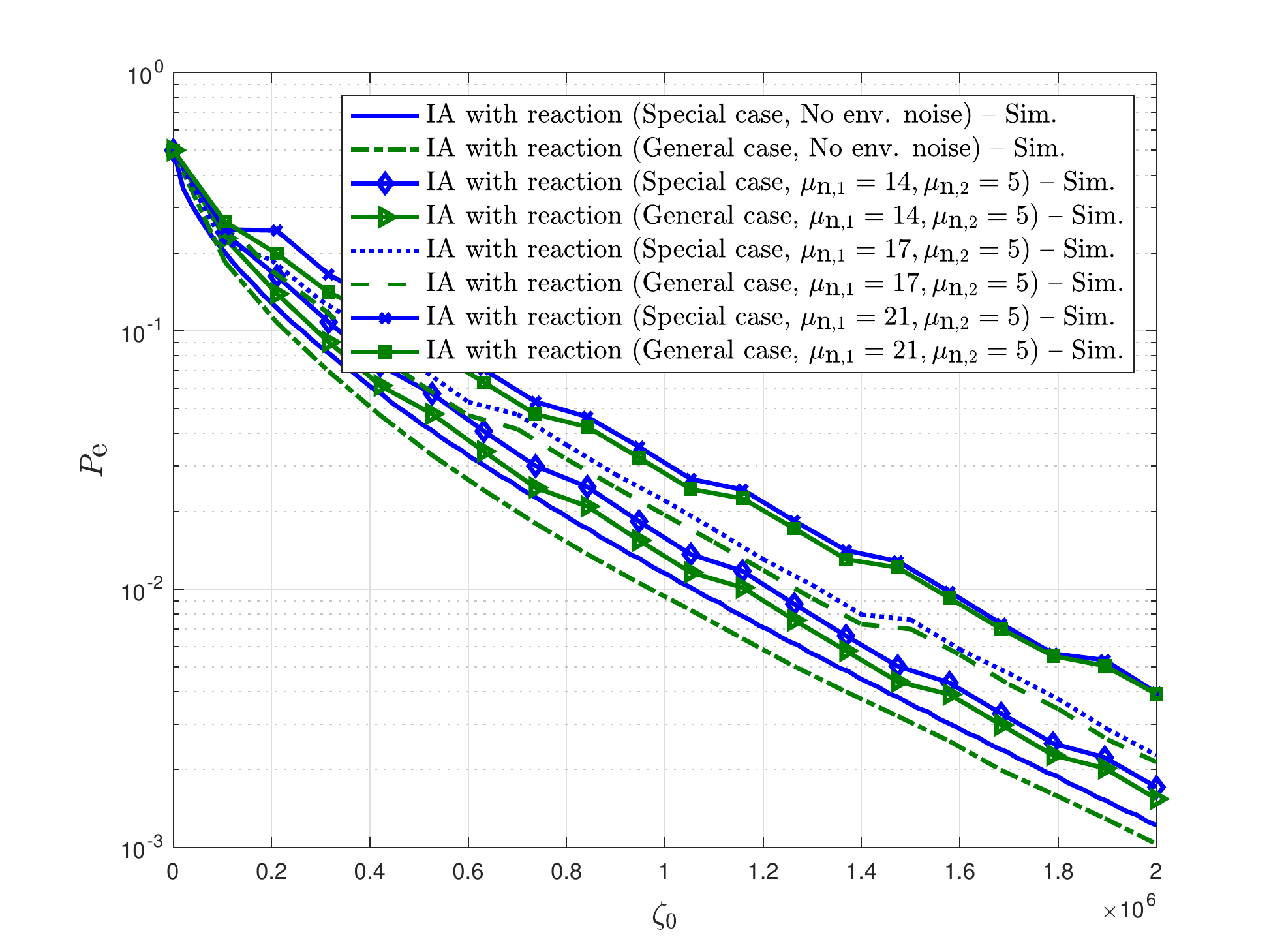}
\vspace{-0.5em}
\caption{The total error probability versus $\zeta_0$ for the proposed IA scheme with reaction in the presence of the environment noise.}
\label{figoverallPe_noise}
\vspace{-1.5em}
\end{figure}

Fig.~\ref{figoverallPe_ISI} investigates the ISI effect in the system. In this figure, the total error probability of the system using IA with reaction assuming one slot memory in the channels along with the error probability in the no ISI case are shown versus $\zeta_0$. We obtain the error probabilities for the optimum decision rule, the sub-optimum decision rule assuming ISI as noise, and the adaptive decision rule using the previously detected bits introduced in \cite{mosayebi2014}. Here, we assume $T_\textrm{s}=10~ \textrm{s}$. It can be seen that the error performance degrades in the presence of ISI. However, we have an acceptable performance using adaptive and optimum decision rules. \textcolor{black}{The effect of the environment noise on the proposed IA scheme with reaction is shown in Fig.~\ref{figoverallPe_noise} for $\mu_{n}^{[1]}=14, 17, 21$ and $\mu_{n}^{[2]}=5$, which shows that the proposed scheme performs well in the presence of noise. It should be noted that, due to the perfect reaction between the molecules of type 1 and 2 with coefficient $c$ at the receivers, the average noise concentration for molecule type 1 is $\textrm{max}\{0,\mu_{\textrm{n},1}-c\mu_{\textrm{n},2}\}$, and the average noise concentration for molecule type 2 is $\textrm{max}\{0,\mu_{\textrm{n},2}-\mu_{\textrm{n},1}/c\}$. Therefore, the error probability depends on the value of $\mu_{\textrm{n},1}-c\mu_{\textrm{n},2}$ and increases as $\mu_{\textrm{n},1}-c\mu_{\textrm{n},2}$ goes away from zero. }

%
%

\section{Conclusion}\label{conclusion}
In this paper, we proposed an interference alignment (IA) scheme for a 3-user molecular interference channel (IFC) based on the choice of the releasing times of the molecules at the transmitters and the sampling times at the receivers using two molecule types. We proposed using reaction in the IA scheme to cancel the aligned interference signal in the medium and reduce the signal dependent noise. Further, we applied the asymptotic IA scheme in classic communications to a $K$-user molecular IFC. For the proposed IA scheme, we obtained the feasible region for the releasing/sampling times and simplified it for a special case. We investigated the error performance of the IFC using the proposed IA scheme with and without reaction and showed that the proposed IA scheme using reaction improves the performance significantly.

\bibliographystyle{ieeetr}
\bibliography{reftest}

\appendices
\section{Proof of Lemma~\ref{lemma2}}\label{apendixa}
For $t_{i}^{[1]}=t_{i}^{[2]}=t_{i}, \tilde{t}_{j}^{[1]}=\tilde{t}_{j}^{[2]}=\tilde{t}_{j}$, $i,j \in \{1,2,3\}$, $\bm{\nu}=0$, and $c_1=c_2=c_3=c$, the reaction conditions in \eqref{IA_reaction2} simplify to
\begin{align}\label{IA_reaction_simplified}
\textrm{Rx}_1: &\frac{r_{12}^2}{t_1-\tilde{t}_2}+\frac{r_{31}^2}{t_{3}-\tilde{t}_1}-\frac{r_{32}^2}{t_3-\tilde{t}_2}=\frac{1}{s},\\\nonumber
\textrm{Rx}_2: &\frac{r_{21}^2}{t_2-\tilde{t}_1}=\frac{1}{s},\quad \textrm{Rx}_3: \frac{r_{31}^2}{t_3-\tilde{t}_1}=\frac{1}{s},
\end{align}
where $s=-\frac{\frac{1}{D_1}-\frac{1}{D_2}}{4\ln{c}+6\ln(\frac{D_1}{D_2})}$. Combining the IA condition in \eqref{cond1_simplified2} and the reaction conditions in \eqref{IA_reaction_simplified} results in
\begin{subequations}
\label{IA_reac_simplified}
\begin{align}\label{condIA_reac_simplifieda}
& r_{12}^2.(t_3-\tilde{t}_2)=r_{32}^2.(t_1-\tilde{t}_2), \\\label{condIA_reac_simplifiedb}
&r_{13}^2.(t_{2}-\tilde{t}_{3})=r_{23}^2.(t_{1}-\tilde{t}_3),\\\label{condIA_reac_simplifiedc}
&t_2-\tilde{t}_1=s.r_{21}^2,\qquad t_3-\tilde{t}_1=s.r_{31}^2.
\end{align}
\end{subequations}
From \eqref{condIA_reac_simplifiedb}-\eqref{condIA_reac_simplifiedc}, $t_1$, $t_2$, and $t_3$ can be obtained from $\tilde{t}_1$ and $\tilde{t}_3$ as in \eqref{IA_reaction9} and if we substitute them in \eqref{condIA_reac_simplifieda}, we obtain \eqref{IA_reaction7}. 
Combining the independency conditions in \eqref{cond2_simplified2} and the reaction conditions in \eqref{IA_reaction_simplified} results in
\begin{align}\label{cond2_simplified5}
\nonumber
&\textrm{Rx}_1:t_{1}-\tilde{t}_{1} \neq s.r_{11}^2, \textrm{Rx}_2:r_{32}^2.(t_{2}-\tilde{t}_{2}) \neq r_{22}^2.(t_{3}-\tilde{t}_{2}),\\
&\textrm{Rx}_3:r_{33}^2.(t_{2}-\tilde{t}_{3}) \neq r_{23}^2.(t_{3}-\tilde{t}_{3}).
\end{align}
Now, substituting $t_i$s from \eqref{IA_reaction9}, we obtain \eqref{IA_reaction8}.\\
Further, we should have $\Delta t_{ij}=t_i-\tilde{t}_j>0$, $i,j \in \{1,2,3\}$. From \eqref{IA_reaction9}, the values of $\Delta t_{ij}$ are related to $\Delta \tilde{t}_{12}$ and $\Delta \tilde{t}_{13}$ as follows:
\begin{align}\label{deltats}
\nonumber
&\Delta t_{11}=t_{1}-\tilde{t}_{1}=(\frac{r_{13}^2}{r_{23}^2}-1)\Delta \tilde{t}_{13}+s.\frac{r_{13}^2}{r_{23}^2}.r_{21}^2,\\\nonumber
&\Delta t_{12}=t_{1}-\tilde{t}_{2}=\Delta \tilde{t}_{12}+(\frac{r_{13}^2}{r_{23}^2}-1)\Delta \tilde{t}_{13}+s.\frac{r_{13}^2}{r_{23}^2}.r_{21}^2,\\\nonumber
&\Delta t_{13}=t_{1}-\tilde{t}_{3}=\frac{r_{13}^2}{r_{23}^2}\Delta \tilde{t}_{13}+s.\frac{r_{13}^2}{r_{23}^2}.r_{21}^2,\\\nonumber
&\Delta t_{21}=t_{2}-\tilde{t}_{1}=s.r_{21}^2, \Delta t_{22}=t_{2}-\tilde{t}_{2}=\Delta \tilde{t}_{12}+s.r_{21}^2,\\\nonumber
&\Delta t_{23}=t_{2}-\tilde{t}_{3}=\Delta \tilde{t}_{13}+s.r_{21}^2,\\\nonumber
&\Delta t_{31}=t_{3}-\tilde{t}_{1}=s.r_{31}^2, \Delta t_{32}=t_{3}-\tilde{t}_{2}=\Delta \tilde{t}_{12}+s.r_{31}^2,\\
&\Delta t_{33}=t_{3}-\tilde{t}_{3}=\Delta \tilde{t}_{12}+s.r_{31}^2.
\end{align}
Hence, $\Delta t_{ij}>0$ and \eqref{deltats} result in \eqref{IA_reaction10}.


\end{document}